\UseRawInputEncoding
\documentclass[lettersize,journal]{IEEEtran}
\usepackage{amsmath, amsthm}
\usepackage[OT1]{fontenc}
\usepackage[english]{babel}
\usepackage[utf8]{inputenc}

\usepackage{algorithm}
\usepackage{algorithmicx}
\usepackage{algpseudocode}

\newcommand{\AlgRule}{%
    \Statex \rule{0.95\linewidth}{0.1pt}%
}
\algrenewcommand\algorithmicprocedure{\textbf{subroutine}}
\usepackage{nicefrac}

\usepackage{array}
\usepackage{textcomp}
\usepackage{stfloats}
\usepackage{url}
\usepackage{verbatim}
\usepackage{graphicx}
\usepackage{xcolor}

\usepackage{tikz}
\usepackage{pgfplots, pgfplotstable}
\usepackage{amssymb}
\usepackage{subfigure}
\usepackage{cite}
\usepackage{dsfont}

\usepackage{makecell}
\usepackage{booktabs}

\usepackage{arydshln}
\newcolumntype{C}[1]{>{\centering\let\newline\\\arraybackslash\hspace{0pt}}m{#1}}
\newcolumntype{L}{>{\arraybackslash}p{0.7\linewidth}}
\newcolumntype{N}{@{}m{0pt}@{}}

\usepgfplotslibrary{statistics}

\usepackage{multirow}
\usepackage{standalone}
\usepackage{balance}
\usepackage[capitalise]{cleveref}
\Crefformat{figure}{#2Figure~#1#3}
\Crefmultiformat{figure}{Figure.~#2#1#3}{ and~#2#1#3}{, #2#1#3}{ and~#2#1#3}

\newtheorem{lemma}{Lemma}
\newtheorem{theorem}{Theorem}

\pgfplotsset{
    legend image with text/.style={
        legend image code/.code={%
            \node[anchor=center] at (0.3cm,0cm) {#1};
        }
    },
    label style = {font=\fontsize{9pt}{7.2}\selectfont},
    tick label style = {font=\fontsize{7pt}{7.2}\selectfont}
}

\definecolor{matplottikz-color1}{HTML}{144b32}
\definecolor{matplottikz-color2}{HTML}{db7622}
\definecolor{matplottikz-color3}{HTML}{314c82}
\definecolor{matplottikz-color4}{HTML}{c71a1f}
\definecolor{matplottikz-color5}{HTML}{1e73af}
\definecolor{matplottikz-color6}{HTML}{bf8fa2}
\definecolor{matplottikz-color7}{HTML}{d76672}
\definecolor{matplottikz-color8}{HTML}{7a6d49}
\definecolor{matplottikz-color9}{HTML}{318d43}
\definecolor{matplottikz-color10}{HTML}{28b5a3}

\definecolor{airforceblue}{rgb}{0.36, 0.54, 0.66}
\definecolor{aliceblue}{rgb}{0.94, 0.97, 1.0}
\definecolor{alizarin}{rgb}{0.82, 0.1, 0.26}
\definecolor{almond}{rgb}{0.94, 0.87, 0.8}
\definecolor{amaranth}{rgb}{0.9, 0.17, 0.31}
\definecolor{amber}{rgb}{1.0, 0.75, 0.0}
\definecolor{amber(sae/ece)}{rgb}{1.0, 0.49, 0.0}
\definecolor{americanrose}{rgb}{1.0, 0.01, 0.24}
\definecolor{amethyst}{rgb}{0.6, 0.4, 0.8}
\definecolor{anti-flashwhite}{rgb}{0.95, 0.95, 0.96}
\definecolor{antiquebrass}{rgb}{0.8, 0.58, 0.46}
\definecolor{antiquefuchsia}{rgb}{0.57, 0.36, 0.51}
\definecolor{antiquewhite}{rgb}{0.98, 0.92, 0.84}
\definecolor{ao}{rgb}{0.0, 0.0, 1.0}
\definecolor{ao(english)}{rgb}{0.0, 0.5, 0.0}
\definecolor{applegreen}{rgb}{0.55, 0.71, 0.0}
\definecolor{apricot}{rgb}{0.98, 0.81, 0.69}
\definecolor{aqua}{rgb}{0.0, 1.0, 1.0}
\definecolor{aquamarine}{rgb}{0.5, 1.0, 0.83}
\definecolor{armygreen}{rgb}{0.29, 0.33, 0.13}
\definecolor{arsenic}{rgb}{0.23, 0.27, 0.29}
\definecolor{arylideyellow}{rgb}{0.91, 0.84, 0.42}
\definecolor{ashgrey}{rgb}{0.7, 0.75, 0.71}
\definecolor{asparagus}{rgb}{0.53, 0.66, 0.42}
\definecolor{atomictangerine}{rgb}{1.0, 0.6, 0.4}
\definecolor{auburn}{rgb}{0.43, 0.21, 0.1}
\definecolor{aureolin}{rgb}{0.99, 0.93, 0.0}
\definecolor{aurometalsaurus}{rgb}{0.43, 0.5, 0.5}
\definecolor{awesome}{rgb}{1.0, 0.13, 0.32}
\definecolor{azure(colorwheel)}{rgb}{0.0, 0.5, 1.0}
\definecolor{azure(web)(azuremist)}{rgb}{0.94, 1.0, 1.0}
\definecolor{babyblue}{rgb}{0.54, 0.81, 0.94}
\definecolor{babyblueeyes}{rgb}{0.63, 0.79, 0.95}
\definecolor{babypink}{rgb}{0.96, 0.76, 0.76}
\definecolor{ballblue}{rgb}{0.13, 0.67, 0.8}
\definecolor{bananamania}{rgb}{0.98, 0.91, 0.71}
\definecolor{bananayellow}{rgb}{1.0, 0.88, 0.21}
\definecolor{battleshipgrey}{rgb}{0.52, 0.52, 0.51}
\definecolor{bazaar}{rgb}{0.6, 0.47, 0.48}
\definecolor{beaublue}{rgb}{0.74, 0.83, 0.9}
\definecolor{beaver}{rgb}{0.62, 0.51, 0.44}
\definecolor{beige}{rgb}{0.96, 0.96, 0.86}
\definecolor{bisque}{rgb}{1.0, 0.89, 0.77}
\definecolor{bistre}{rgb}{0.24, 0.17, 0.12}
\definecolor{bittersweet}{rgb}{1.0, 0.44, 0.37}
\definecolor{black}{rgb}{0.0, 0.0, 0.0}
\definecolor{blanchedalmond}{rgb}{1.0, 0.92, 0.8}
\definecolor{bleudefrance}{rgb}{0.19, 0.55, 0.91}
\definecolor{blizzardblue}{rgb}{0.67, 0.9, 0.93}
\definecolor{blond}{rgb}{0.98, 0.94, 0.75}
\definecolor{blue}{rgb}{0.0, 0.0, 1.0}
\definecolor{blue(munsell)}{rgb}{0.0, 0.5, 0.69}
\definecolor{blue(ncs)}{rgb}{0.0, 0.53, 0.74}
\definecolor{blue(pigment)}{rgb}{0.2, 0.2, 0.6}
\definecolor{blue(ryb)}{rgb}{0.01, 0.28, 1.0}
\definecolor{bluebell}{rgb}{0.64, 0.64, 0.82}
\definecolor{bluegray}{rgb}{0.4, 0.6, 0.8}
\definecolor{blue-green}{rgb}{0.0, 0.87, 0.87}
\definecolor{blue-violet}{rgb}{0.54, 0.17, 0.89}
\definecolor{blush}{rgb}{0.87, 0.36, 0.51}
\definecolor{bole}{rgb}{0.47, 0.27, 0.23}
\definecolor{bondiblue}{rgb}{0.0, 0.58, 0.71}
\definecolor{bostonuniversityred}{rgb}{0.8, 0.0, 0.0}
\definecolor{brandeisblue}{rgb}{0.0, 0.44, 1.0}
\definecolor{brass}{rgb}{0.71, 0.65, 0.26}
\definecolor{brickred}{rgb}{0.8, 0.25, 0.33}
\definecolor{brightcerulean}{rgb}{0.11, 0.67, 0.84}
\definecolor{brightgreen}{rgb}{0.4, 1.0, 0.0}
\definecolor{brightlavender}{rgb}{0.75, 0.58, 0.89}
\definecolor{brightmaroon}{rgb}{0.76, 0.13, 0.28}
\definecolor{brightpink}{rgb}{1.0, 0.0, 0.5}
\definecolor{brightturquoise}{rgb}{0.03, 0.91, 0.87}
\definecolor{brightube}{rgb}{0.82, 0.62, 0.91}
\definecolor{brilliantlavender}{rgb}{0.96, 0.73, 1.0}
\definecolor{brilliantrose}{rgb}{1.0, 0.33, 0.64}
\definecolor{brinkpink}{rgb}{0.98, 0.38, 0.5}
\definecolor{britishracinggreen}{rgb}{0.0, 0.26, 0.15}
\definecolor{bronze}{rgb}{0.8, 0.5, 0.2}
\definecolor{brown(traditional)}{rgb}{0.59, 0.29, 0.0}
\definecolor{brown(web)}{rgb}{0.65, 0.16, 0.16}
\definecolor{bubblegum}{rgb}{0.99, 0.76, 0.8}
\definecolor{bubbles}{rgb}{0.91, 1.0, 1.0}
\definecolor{buff}{rgb}{0.94, 0.86, 0.51}
\definecolor{bulgarianrose}{rgb}{0.28, 0.02, 0.03}
\definecolor{burgundy}{rgb}{0.5, 0.0, 0.13}
\definecolor{burlywood}{rgb}{0.87, 0.72, 0.53}
\definecolor{burntorange}{rgb}{0.8, 0.33, 0.0}
\definecolor{burntsienna}{rgb}{0.91, 0.45, 0.32}
\definecolor{burntumber}{rgb}{0.54, 0.2, 0.14}
\definecolor{byzantine}{rgb}{0.74, 0.2, 0.64}
\definecolor{byzantium}{rgb}{0.44, 0.16, 0.39}
\definecolor{cadet}{rgb}{0.33, 0.41, 0.47}
\definecolor{cadetblue}{rgb}{0.37, 0.62, 0.63}
\definecolor{cadetgrey}{rgb}{0.57, 0.64, 0.69}
\definecolor{cadmiumgreen}{rgb}{0.0, 0.42, 0.24}
\definecolor{cadmiumorange}{rgb}{0.93, 0.53, 0.18}
\definecolor{cadmiumred}{rgb}{0.89, 0.0, 0.13}
\definecolor{cadmiumyellow}{rgb}{1.0, 0.96, 0.0}
\definecolor{calpolypomonagreen}{rgb}{0.12, 0.3, 0.17}
\definecolor{cambridgeblue}{rgb}{0.64, 0.76, 0.68}
\definecolor{camel}{rgb}{0.76, 0.6, 0.42}
\definecolor{camouflagegreen}{rgb}{0.47, 0.53, 0.42}
\definecolor{canaryyellow}{rgb}{1.0, 0.94, 0.0}
\definecolor{candyapplered}{rgb}{1.0, 0.03, 0.0}
\definecolor{candypink}{rgb}{0.89, 0.44, 0.48}
\definecolor{capri}{rgb}{0.0, 0.75, 1.0}
\definecolor{caputmortuum}{rgb}{0.35, 0.15, 0.13}
\definecolor{cardinal}{rgb}{0.77, 0.12, 0.23}
\definecolor{caribbeangreen}{rgb}{0.0, 0.8, 0.6}
\definecolor{carmine}{rgb}{0.59, 0.0, 0.09}
\definecolor{carminepink}{rgb}{0.92, 0.3, 0.26}
\definecolor{carminered}{rgb}{1.0, 0.0, 0.22}
\definecolor{carnationpink}{rgb}{1.0, 0.65, 0.79}
\definecolor{carnelian}{rgb}{0.7, 0.11, 0.11}
\definecolor{carolinablue}{rgb}{0.6, 0.73, 0.89}
\definecolor{carrotorange}{rgb}{0.93, 0.57, 0.13}
\definecolor{ceil}{rgb}{0.57, 0.63, 0.81}
\definecolor{celadon}{rgb}{0.67, 0.88, 0.69}
\definecolor{celestialblue}{rgb}{0.29, 0.59, 0.82}
\definecolor{cerise}{rgb}{0.87, 0.19, 0.39}
\definecolor{cerisepink}{rgb}{0.93, 0.23, 0.51}
\definecolor{cerulean}{rgb}{0.0, 0.48, 0.65}
\definecolor{ceruleanblue}{rgb}{0.16, 0.32, 0.75}
\definecolor{chamoisee}{rgb}{0.63, 0.47, 0.35}
\definecolor{champagne}{rgb}{0.97, 0.91, 0.81}
\definecolor{charcoal}{rgb}{0.21, 0.27, 0.31}
\definecolor{chartreuse(traditional)}{rgb}{0.87, 1.0, 0.0}
\definecolor{chartreuse(web)}{rgb}{0.5, 1.0, 0.0}
\definecolor{cherryblossompink}{rgb}{1.0, 0.72, 0.77}
\definecolor{chestnut}{rgb}{0.8, 0.36, 0.36}
\definecolor{chocolate(traditional)}{rgb}{0.48, 0.25, 0.0}
\definecolor{chocolate(web)}{rgb}{0.82, 0.41, 0.12}
\definecolor{chromeyellow}{rgb}{1.0, 0.65, 0.0}
\definecolor{cinereous}{rgb}{0.6, 0.51, 0.48}
\definecolor{cinnabar}{rgb}{0.89, 0.26, 0.2}
\definecolor{cinnamon}{rgb}{0.82, 0.41, 0.12}
\definecolor{citrine}{rgb}{0.89, 0.82, 0.04}
\definecolor{classicrose}{rgb}{0.98, 0.8, 0.91}
\definecolor{cobalt}{rgb}{0.0, 0.28, 0.67}
\definecolor{cocoabrown}{rgb}{0.82, 0.41, 0.12}
\definecolor{columbiablue}{rgb}{0.61, 0.87, 1.0}
\definecolor{coolblack}{rgb}{0.0, 0.18, 0.39}
\definecolor{coolgrey}{rgb}{0.55, 0.57, 0.67}
\definecolor{copper}{rgb}{0.72, 0.45, 0.2}
\definecolor{copperrose}{rgb}{0.6, 0.4, 0.4}
\definecolor{coquelicot}{rgb}{1.0, 0.22, 0.0}
\definecolor{coral}{rgb}{1.0, 0.5, 0.31}
\definecolor{coralpink}{rgb}{0.97, 0.51, 0.47}
\definecolor{coralred}{rgb}{1.0, 0.25, 0.25}
\definecolor{cordovan}{rgb}{0.54, 0.25, 0.27}
\definecolor{corn}{rgb}{0.98, 0.93, 0.36}
\definecolor{cornellred}{rgb}{0.7, 0.11, 0.11}
\definecolor{cornflowerblue}{rgb}{0.39, 0.58, 0.93}
\definecolor{cornsilk}{rgb}{1.0, 0.97, 0.86}
\definecolor{cosmiclatte}{rgb}{1.0, 0.97, 0.91}
\definecolor{cottoncandy}{rgb}{1.0, 0.74, 0.85}
\definecolor{cream}{rgb}{1.0, 0.99, 0.82}
\definecolor{crimson}{rgb}{0.86, 0.08, 0.24}
\definecolor{crimsonglory}{rgb}{0.75, 0.0, 0.2}
\definecolor{cyan}{rgb}{0.0, 1.0, 1.0}
\definecolor{cyan(process)}{rgb}{0.0, 0.72, 0.92}
\definecolor{daffodil}{rgb}{1.0, 1.0, 0.19}
\definecolor{dandelion}{rgb}{0.94, 0.88, 0.19}
\definecolor{darkblue}{rgb}{0.0, 0.0, 0.55}
\definecolor{darkbrown}{rgb}{0.4, 0.26, 0.13}
\definecolor{darkbyzantium}{rgb}{0.36, 0.22, 0.33}
\definecolor{darkcandyapplered}{rgb}{0.64, 0.0, 0.0}
\definecolor{darkcerulean}{rgb}{0.03, 0.27, 0.49}
\definecolor{darkchampagne}{rgb}{0.76, 0.7, 0.5}
\definecolor{darkchestnut}{rgb}{0.6, 0.41, 0.38}
\definecolor{darkcoral}{rgb}{0.8, 0.36, 0.27}
\definecolor{darkcyan}{rgb}{0.0, 0.55, 0.55}
\definecolor{darkelectricblue}{rgb}{0.33, 0.41, 0.47}
\definecolor{darkgoldenrod}{rgb}{0.72, 0.53, 0.04}
\definecolor{darkgray}{rgb}{0.66, 0.66, 0.66}
\definecolor{darkgreen}{rgb}{0.0, 0.2, 0.13}
\definecolor{darkjunglegreen}{rgb}{0.1, 0.14, 0.13}
\definecolor{darkkhaki}{rgb}{0.74, 0.72, 0.42}
\definecolor{darklava}{rgb}{0.28, 0.24, 0.2}
\definecolor{darklavender}{rgb}{0.45, 0.31, 0.59}
\definecolor{darkmagenta}{rgb}{0.55, 0.0, 0.55}
\definecolor{darkmidnightblue}{rgb}{0.0, 0.2, 0.4}
\definecolor{darkolivegreen}{rgb}{0.33, 0.42, 0.18}
\definecolor{darkorange}{rgb}{1.0, 0.55, 0.0}
\definecolor{darkorchid}{rgb}{0.6, 0.2, 0.8}
\definecolor{darkpastelblue}{rgb}{0.47, 0.62, 0.8}
\definecolor{darkpastelgreen}{rgb}{0.01, 0.75, 0.24}
\definecolor{darkpastelpurple}{rgb}{0.59, 0.44, 0.84}
\definecolor{darkpastelred}{rgb}{0.76, 0.23, 0.13}
\definecolor{darkpink}{rgb}{0.91, 0.33, 0.5}
\definecolor{darkpowderblue}{rgb}{0.0, 0.2, 0.6}
\definecolor{darkraspberry}{rgb}{0.53, 0.15, 0.34}
\definecolor{darkred}{rgb}{0.55, 0.0, 0.0}
\definecolor{darksalmon}{rgb}{0.91, 0.59, 0.48}
\definecolor{darkscarlet}{rgb}{0.34, 0.01, 0.1}
\definecolor{darkseagreen}{rgb}{0.56, 0.74, 0.56}
\definecolor{darksienna}{rgb}{0.24, 0.08, 0.08}
\definecolor{darkslateblue}{rgb}{0.28, 0.24, 0.55}
\definecolor{darkslategray}{rgb}{0.18, 0.31, 0.31}
\definecolor{darkspringgreen}{rgb}{0.09, 0.45, 0.27}
\definecolor{darktan}{rgb}{0.57, 0.51, 0.32}
\definecolor{darktangerine}{rgb}{1.0, 0.66, 0.07}
\definecolor{darktaupe}{rgb}{0.28, 0.24, 0.2}
\definecolor{darkterracotta}{rgb}{0.8, 0.31, 0.36}
\definecolor{darkturquoise}{rgb}{0.0, 0.81, 0.82}
\definecolor{darkviolet}{rgb}{0.58, 0.0, 0.83}
\definecolor{dartmouthgreen}{rgb}{0.05, 0.5, 0.06}
\definecolor{debianred}{rgb}{0.84, 0.04, 0.33}
\definecolor{deepcarmine}{rgb}{0.66, 0.13, 0.24}
\definecolor{deepcarminepink}{rgb}{0.94, 0.19, 0.22}
\definecolor{deepcarrotorange}{rgb}{0.91, 0.41, 0.17}
\definecolor{deepcerise}{rgb}{0.85, 0.2, 0.53}
\definecolor{deepchampagne}{rgb}{0.98, 0.84, 0.65}
\definecolor{deepchestnut}{rgb}{0.73, 0.31, 0.28}
\definecolor{deepfuchsia}{rgb}{0.76, 0.33, 0.76}
\definecolor{deepjunglegreen}{rgb}{0.0, 0.29, 0.29}
\definecolor{deeplilac}{rgb}{0.6, 0.33, 0.73}
\definecolor{deepmagenta}{rgb}{0.8, 0.0, 0.8}
\definecolor{deeppeach}{rgb}{1.0, 0.8, 0.64}
\definecolor{deeppink}{rgb}{1.0, 0.08, 0.58}
\definecolor{deepsaffron}{rgb}{1.0, 0.6, 0.2}
\definecolor{deepskyblue}{rgb}{0.0, 0.75, 1.0}
\definecolor{denim}{rgb}{0.08, 0.38, 0.74}
\definecolor{desert}{rgb}{0.76, 0.6, 0.42}
\definecolor{desertsand}{rgb}{0.93, 0.79, 0.69}
\definecolor{dimgray}{rgb}{0.41, 0.41, 0.41}
\definecolor{dodgerblue}{rgb}{0.12, 0.56, 1.0}
\definecolor{dogwoodrose}{rgb}{0.84, 0.09, 0.41}
\definecolor{dollarbill}{rgb}{0.52, 0.73, 0.4}
\definecolor{drab}{rgb}{0.59, 0.44, 0.09}
\definecolor{dukeblue}{rgb}{0.0, 0.0, 0.61}
\definecolor{earthyellow}{rgb}{0.88, 0.66, 0.37}
\definecolor{ecru}{rgb}{0.76, 0.7, 0.5}
\definecolor{eggplant}{rgb}{0.38, 0.25, 0.32}
\definecolor{eggshell}{rgb}{0.94, 0.92, 0.84}
\definecolor{egyptianblue}{rgb}{0.06, 0.2, 0.65}
\definecolor{electricblue}{rgb}{0.49, 0.98, 1.0}
\definecolor{electriccrimson}{rgb}{1.0, 0.0, 0.25}
\definecolor{electriccyan}{rgb}{0.0, 1.0, 1.0}
\definecolor{electricgreen}{rgb}{0.0, 1.0, 0.0}
\definecolor{electricindigo}{rgb}{0.44, 0.0, 1.0}
\definecolor{electriclavender}{rgb}{0.96, 0.73, 1.0}
\definecolor{electriclime}{rgb}{0.8, 1.0, 0.0}
\definecolor{electricpurple}{rgb}{0.75, 0.0, 1.0}
\definecolor{electricultramarine}{rgb}{0.25, 0.0, 1.0}
\definecolor{electricviolet}{rgb}{0.56, 0.0, 1.0}
\definecolor{electricyellow}{rgb}{1.0, 1.0, 0.0}
\definecolor{emerald}{rgb}{0.31, 0.78, 0.47}
\definecolor{etonblue}{rgb}{0.59, 0.78, 0.64}
\definecolor{fallow}{rgb}{0.76, 0.6, 0.42}
\definecolor{falured}{rgb}{0.5, 0.09, 0.09}
\definecolor{fandango}{rgb}{0.71, 0.2, 0.54}
\definecolor{fashionfuchsia}{rgb}{0.96, 0.0, 0.63}
\definecolor{fawn}{rgb}{0.9, 0.67, 0.44}
\definecolor{feldgrau}{rgb}{0.3, 0.36, 0.33}
\definecolor{ferngreen}{rgb}{0.31, 0.47, 0.26}
\definecolor{ferrarired}{rgb}{1.0, 0.11, 0.0}
\definecolor{fielddrab}{rgb}{0.42, 0.33, 0.12}
\definecolor{firebrick}{rgb}{0.7, 0.13, 0.13}
\definecolor{fireenginered}{rgb}{0.81, 0.09, 0.13}
\definecolor{flame}{rgb}{0.89, 0.35, 0.13}
\definecolor{flamingopink}{rgb}{0.99, 0.56, 0.67}
\definecolor{flavescent}{rgb}{0.97, 0.91, 0.56}
\definecolor{flax}{rgb}{0.93, 0.86, 0.51}
\definecolor{floralwhite}{rgb}{1.0, 0.98, 0.94}
\definecolor{fluorescentorange}{rgb}{1.0, 0.75, 0.0}
\definecolor{fluorescentpink}{rgb}{1.0, 0.08, 0.58}
\definecolor{fluorescentyellow}{rgb}{0.8, 1.0, 0.0}
\definecolor{folly}{rgb}{1.0, 0.0, 0.31}
\definecolor{forestgreen(traditional)}{rgb}{0.0, 0.27, 0.13}
\definecolor{forestgreen(web)}{rgb}{0.13, 0.55, 0.13}
\definecolor{frenchbeige}{rgb}{0.65, 0.48, 0.36}
\definecolor{frenchblue}{rgb}{0.0, 0.45, 0.73}
\definecolor{frenchlilac}{rgb}{0.53, 0.38, 0.56}
\definecolor{frenchrose}{rgb}{0.96, 0.29, 0.54}
\definecolor{fuchsia}{rgb}{1.0, 0.0, 1.0}
\definecolor{fuchsiapink}{rgb}{1.0, 0.47, 1.0}
\definecolor{fulvous}{rgb}{0.86, 0.52, 0.0}
\definecolor{fuzzywuzzy}{rgb}{0.8, 0.4, 0.4}
\definecolor{gainsboro}{rgb}{0.86, 0.86, 0.86}
\definecolor{gamboge}{rgb}{0.89, 0.61, 0.06}
\definecolor{ghostwhite}{rgb}{0.97, 0.97, 1.0}
\definecolor{ginger}{rgb}{0.69, 0.4, 0.0}
\definecolor{glaucous}{rgb}{0.38, 0.51, 0.71}
\definecolor{gold(metallic)}{rgb}{0.83, 0.69, 0.22}
\definecolor{gold(web)(golden)}{rgb}{1.0, 0.84, 0.0}
\definecolor{goldenbrown}{rgb}{0.6, 0.4, 0.08}
\definecolor{goldenpoppy}{rgb}{0.99, 0.76, 0.0}
\definecolor{goldenyellow}{rgb}{1.0, 0.87, 0.0}
\definecolor{goldenrod}{rgb}{0.85, 0.65, 0.13}
\definecolor{grannysmithapple}{rgb}{0.66, 0.89, 0.63}
\definecolor{gray}{rgb}{0.5, 0.5, 0.5}
\definecolor{gray(html/cssgray)}{rgb}{0.5, 0.5, 0.5}
\definecolor{gray(x11gray)}{rgb}{0.75, 0.75, 0.75}
\definecolor{gray-asparagus}{rgb}{0.27, 0.35, 0.27}
\definecolor{green(colorwheel)(x11green)}{rgb}{0.0, 1.0, 0.0}
\definecolor{green(html/cssgreen)}{rgb}{0.0, 0.5, 0.0}
\definecolor{green(munsell)}{rgb}{0.0, 0.66, 0.47}
\definecolor{green(ncs)}{rgb}{0.0, 0.62, 0.42}
\definecolor{green(pigment)}{rgb}{0.0, 0.65, 0.31}
\definecolor{green(ryb)}{rgb}{0.4, 0.69, 0.2}
\definecolor{green-yellow}{rgb}{0.68, 1.0, 0.18}
\definecolor{grullo}{rgb}{0.66, 0.6, 0.53}
\definecolor{guppiegreen}{rgb}{0.0, 1.0, 0.5}
\definecolor{halayaube}{rgb}{0.4, 0.22, 0.33}
\definecolor{hanblue}{rgb}{0.27, 0.42, 0.81}
\definecolor{hanpurple}{rgb}{0.32, 0.09, 0.98}
\definecolor{hansayellow}{rgb}{0.91, 0.84, 0.42}
\definecolor{harlequin}{rgb}{0.25, 1.0, 0.0}
\definecolor{harvardcrimson}{rgb}{0.79, 0.0, 0.09}
\definecolor{harvestgold}{rgb}{0.85, 0.57, 0.0}
\definecolor{heartgold}{rgb}{0.5, 0.5, 0.0}
\definecolor{heliotrope}{rgb}{0.87, 0.45, 1.0}
\definecolor{hollywoodcerise}{rgb}{0.96, 0.0, 0.63}
\definecolor{honeydew}{rgb}{0.94, 1.0, 0.94}
\definecolor{hotmagenta}{rgb}{1.0, 0.11, 0.81}
\definecolor{hotpink}{rgb}{1.0, 0.41, 0.71}
\definecolor{huntergreen}{rgb}{0.21, 0.37, 0.23}
\definecolor{iceberg}{rgb}{0.44, 0.65, 0.82}
\definecolor{icterine}{rgb}{0.99, 0.97, 0.37}
\definecolor{inchworm}{rgb}{0.7, 0.93, 0.36}
\definecolor{indiagreen}{rgb}{0.07, 0.53, 0.03}
\definecolor{indianred}{rgb}{0.8, 0.36, 0.36}
\definecolor{indianyellow}{rgb}{0.89, 0.66, 0.34}
\definecolor{indigo(dye)}{rgb}{0.0, 0.25, 0.42}
\definecolor{indigo(web)}{rgb}{0.29, 0.0, 0.51}
\definecolor{internationalkleinblue}{rgb}{0.0, 0.18, 0.65}
\definecolor{internationalorange}{rgb}{1.0, 0.31, 0.0}
\definecolor{iris}{rgb}{0.35, 0.31, 0.81}
\definecolor{isabelline}{rgb}{0.96, 0.94, 0.93}
\definecolor{islamicgreen}{rgb}{0.0, 0.56, 0.0}
\definecolor{ivory}{rgb}{1.0, 1.0, 0.94}
\definecolor{jade}{rgb}{0.0, 0.66, 0.42}
\definecolor{jasper}{rgb}{0.84, 0.23, 0.24}
\definecolor{jazzberryjam}{rgb}{0.65, 0.04, 0.37}
\definecolor{jonquil}{rgb}{0.98, 0.85, 0.37}
\definecolor{junebud}{rgb}{0.74, 0.85, 0.34}
\definecolor{junglegreen}{rgb}{0.16, 0.67, 0.53}
\definecolor{kellygreen}{rgb}{0.3, 0.73, 0.09}
\definecolor{khaki(html/css)(khaki)}{rgb}{0.76, 0.69, 0.57}
\definecolor{khaki(x11)(lightkhaki)}{rgb}{0.94, 0.9, 0.55}
\definecolor{lasallegreen}{rgb}{0.03, 0.47, 0.19}
\definecolor{languidlavender}{rgb}{0.84, 0.79, 0.87}
\definecolor{lapislazuli}{rgb}{0.15, 0.38, 0.61}
\definecolor{laserlemon}{rgb}{1.0, 1.0, 0.13}
\definecolor{lava}{rgb}{0.81, 0.06, 0.13}
\definecolor{lavender(floral)}{rgb}{0.71, 0.49, 0.86}
\definecolor{lavender(web)}{rgb}{0.9, 0.9, 0.98}
\definecolor{lavenderblue}{rgb}{0.8, 0.8, 1.0}
\definecolor{lavenderblush}{rgb}{1.0, 0.94, 0.96}
\definecolor{lavendergray}{rgb}{0.77, 0.76, 0.82}
\definecolor{lavenderindigo}{rgb}{0.58, 0.34, 0.92}
\definecolor{lavendermagenta}{rgb}{0.93, 0.51, 0.93}
\definecolor{lavendermist}{rgb}{0.9, 0.9, 0.98}
\definecolor{lavenderpink}{rgb}{0.98, 0.68, 0.82}
\definecolor{lavenderpurple}{rgb}{0.59, 0.48, 0.71}
\definecolor{lavenderrose}{rgb}{0.98, 0.63, 0.89}
\definecolor{lawngreen}{rgb}{0.49, 0.99, 0.0}
\definecolor{lemon}{rgb}{1.0, 0.97, 0.0}
\definecolor{lemonchiffon}{rgb}{1.0, 0.98, 0.8}
\definecolor{lightapricot}{rgb}{0.99, 0.84, 0.69}
\definecolor{lightblue}{rgb}{0.68, 0.85, 0.9}
\definecolor{lightbrown}{rgb}{0.71, 0.4, 0.11}
\definecolor{lightcarminepink}{rgb}{0.9, 0.4, 0.38}
\definecolor{lightcoral}{rgb}{0.94, 0.5, 0.5}
\definecolor{lightcornflowerblue}{rgb}{0.6, 0.81, 0.93}
\definecolor{lightcyan}{rgb}{0.88, 1.0, 1.0}
\definecolor{lightfuchsiapink}{rgb}{0.98, 0.52, 0.9}
\definecolor{lightgoldenrodyellow}{rgb}{0.98, 0.98, 0.82}
\definecolor{lightgray}{rgb}{0.83, 0.83, 0.83}
\definecolor{lightgreen}{rgb}{0.56, 0.93, 0.56}
\definecolor{lightkhaki}{rgb}{0.94, 0.9, 0.55}
\definecolor{lightmauve}{rgb}{0.86, 0.82, 1.0}
\definecolor{lightpastelpurple}{rgb}{0.69, 0.61, 0.85}
\definecolor{lightpink}{rgb}{1.0, 0.71, 0.76}
\definecolor{lightsalmon}{rgb}{1.0, 0.63, 0.48}
\definecolor{lightsalmonpink}{rgb}{1.0, 0.6, 0.6}
\definecolor{lightseagreen}{rgb}{0.13, 0.7, 0.67}
\definecolor{lightskyblue}{rgb}{0.53, 0.81, 0.98}
\definecolor{lightslategray}{rgb}{0.47, 0.53, 0.6}
\definecolor{lighttaupe}{rgb}{0.7, 0.55, 0.43}
\definecolor{lightthulianpink}{rgb}{0.9, 0.56, 0.67}
\definecolor{lightyellow}{rgb}{1.0, 1.0, 0.88}
\definecolor{lilac}{rgb}{0.78, 0.64, 0.78}
\definecolor{lime(colorwheel)}{rgb}{0.75, 1.0, 0.0}
\definecolor{lime(web)(x11green)}{rgb}{0.0, 1.0, 0.0}
\definecolor{limegreen}{rgb}{0.2, 0.8, 0.2}
\definecolor{lincolngreen}{rgb}{0.11, 0.35, 0.02}
\definecolor{linen}{rgb}{0.98, 0.94, 0.9}
\definecolor{liver}{rgb}{0.33, 0.29, 0.31}
\definecolor{lust}{rgb}{0.9, 0.13, 0.13}
\definecolor{macaroniandcheese}{rgb}{1.0, 0.74, 0.53}
\definecolor{magenta}{rgb}{1.0, 0.0, 1.0}
\definecolor{magenta(dye)}{rgb}{0.79, 0.08, 0.48}
\definecolor{magenta(process)}{rgb}{1.0, 0.0, 0.56}
\definecolor{magicmint}{rgb}{0.67, 0.94, 0.82}
\definecolor{magnolia}{rgb}{0.97, 0.96, 1.0}
\definecolor{mahogany}{rgb}{0.75, 0.25, 0.0}
\definecolor{maize}{rgb}{0.98, 0.93, 0.37}
\definecolor{majorelleblue}{rgb}{0.38, 0.31, 0.86}
\definecolor{malachite}{rgb}{0.04, 0.85, 0.32}
\definecolor{manatee}{rgb}{0.59, 0.6, 0.67}
\definecolor{mangotango}{rgb}{1.0, 0.51, 0.26}
\definecolor{maroon(html/css)}{rgb}{0.5, 0.0, 0.0}
\definecolor{maroon(x11)}{rgb}{0.69, 0.19, 0.38}
\definecolor{mauve}{rgb}{0.88, 0.69, 1.0}
\definecolor{mauvetaupe}{rgb}{0.57, 0.37, 0.43}
\definecolor{mauvelous}{rgb}{0.94, 0.6, 0.67}
\definecolor{mayablue}{rgb}{0.45, 0.76, 0.98}
\definecolor{meatbrown}{rgb}{0.9, 0.72, 0.23}
\definecolor{mediumaquamarine}{rgb}{0.4, 0.8, 0.67}
\definecolor{mediumblue}{rgb}{0.0, 0.0, 0.8}
\definecolor{mediumcandyapplered}{rgb}{0.89, 0.02, 0.17}
\definecolor{mediumcarmine}{rgb}{0.69, 0.25, 0.21}
\definecolor{mediumchampagne}{rgb}{0.95, 0.9, 0.67}
\definecolor{mediumelectricblue}{rgb}{0.01, 0.31, 0.59}
\definecolor{mediumjunglegreen}{rgb}{0.11, 0.21, 0.18}
\definecolor{mediumlavendermagenta}{rgb}{0.8, 0.6, 0.8}
\definecolor{mediumorchid}{rgb}{0.73, 0.33, 0.83}
\definecolor{mediumpersianblue}{rgb}{0.0, 0.4, 0.65}
\definecolor{mediumpurple}{rgb}{0.58, 0.44, 0.86}
\definecolor{mediumred-violet}{rgb}{0.73, 0.2, 0.52}
\definecolor{mediumseagreen}{rgb}{0.24, 0.7, 0.44}
\definecolor{mediumslateblue}{rgb}{0.48, 0.41, 0.93}
\definecolor{mediumspringbud}{rgb}{0.79, 0.86, 0.54}
\definecolor{mediumspringgreen}{rgb}{0.0, 0.98, 0.6}
\definecolor{mediumtaupe}{rgb}{0.4, 0.3, 0.28}
\definecolor{mediumtealblue}{rgb}{0.0, 0.33, 0.71}
\definecolor{mediumturquoise}{rgb}{0.28, 0.82, 0.8}
\definecolor{mediumviolet-red}{rgb}{0.78, 0.08, 0.52}
\definecolor{melon}{rgb}{0.99, 0.74, 0.71}
\definecolor{midnightblue}{rgb}{0.1, 0.1, 0.44}
\definecolor{midnightgreen(eaglegreen)}{rgb}{0.0, 0.29, 0.33}
\definecolor{mikadoyellow}{rgb}{1.0, 0.77, 0.05}
\definecolor{mint}{rgb}{0.24, 0.71, 0.54}
\definecolor{mintcream}{rgb}{0.96, 1.0, 0.98}
\definecolor{mintgreen}{rgb}{0.6, 1.0, 0.6}
\definecolor{mistyrose}{rgb}{1.0, 0.89, 0.88}
\definecolor{moccasin}{rgb}{0.98, 0.92, 0.84}
\definecolor{modebeige}{rgb}{0.59, 0.44, 0.09}
\definecolor{moonstoneblue}{rgb}{0.45, 0.66, 0.76}
\definecolor{mordantred19}{rgb}{0.68, 0.05, 0.0}
\definecolor{mossgreen}{rgb}{0.68, 0.87, 0.68}
\definecolor{mountainmeadow}{rgb}{0.19, 0.73, 0.56}
\definecolor{mountbattenpink}{rgb}{0.6, 0.48, 0.55}
\definecolor{mulberry}{rgb}{0.77, 0.29, 0.55}
\definecolor{mustard}{rgb}{1.0, 0.86, 0.35}
\definecolor{myrtle}{rgb}{0.13, 0.26, 0.12}
\definecolor{msugreen}{rgb}{0.09, 0.27, 0.23}
\definecolor{nadeshikopink}{rgb}{0.96, 0.68, 0.78}
\definecolor{napiergreen}{rgb}{0.16, 0.5, 0.0}
\definecolor{naplesyellow}{rgb}{0.98, 0.85, 0.37}
\definecolor{navajowhite}{rgb}{1.0, 0.87, 0.68}
\definecolor{navyblue}{rgb}{0.0, 0.0, 0.5}
\definecolor{neoncarrot}{rgb}{1.0, 0.64, 0.26}
\definecolor{neonfuchsia}{rgb}{1.0, 0.25, 0.39}
\definecolor{neongreen}{rgb}{0.22, 0.88, 0.08}
\definecolor{non-photoblue}{rgb}{0.64, 0.87, 0.93}
\definecolor{oceanboatblue}{rgb}{0.0, 0.47, 0.75}
\definecolor{ochre}{rgb}{0.8, 0.47, 0.13}
\definecolor{officegreen}{rgb}{0.0, 0.5, 0.0}
\definecolor{oldgold}{rgb}{0.81, 0.71, 0.23}
\definecolor{oldlace}{rgb}{0.99, 0.96, 0.9}
\definecolor{oldlavender}{rgb}{0.47, 0.41, 0.47}
\definecolor{oldmauve}{rgb}{0.4, 0.19, 0.28}
\definecolor{oldrose}{rgb}{0.75, 0.5, 0.51}
\definecolor{olive}{rgb}{0.5, 0.5, 0.0}
\definecolor{olivedrab(web)(olivedrab3)}{rgb}{0.42, 0.56, 0.14}
\definecolor{olivedrab7}{rgb}{0.24, 0.2, 0.12}
\definecolor{olivine}{rgb}{0.6, 0.73, 0.45}
\definecolor{onyx}{rgb}{0.06, 0.06, 0.06}
\definecolor{operamauve}{rgb}{0.72, 0.52, 0.65}
\definecolor{orange(colorwheel)}{rgb}{1.0, 0.5, 0.0}
\definecolor{orange(ryb)}{rgb}{0.98, 0.6, 0.01}
\definecolor{orange(webcolor)}{rgb}{1.0, 0.65, 0.0}
\definecolor{orangepeel}{rgb}{1.0, 0.62, 0.0}
\definecolor{orange-red}{rgb}{1.0, 0.27, 0.0}
\definecolor{orchid}{rgb}{0.85, 0.44, 0.84}
\definecolor{otterbrown}{rgb}{0.4, 0.26, 0.13}
\definecolor{outerspace}{rgb}{0.25, 0.29, 0.3}
\definecolor{outrageousorange}{rgb}{1.0, 0.43, 0.29}
\definecolor{oxfordblue}{rgb}{0.0, 0.13, 0.28}
\definecolor{oucrimsonred}{rgb}{0.6, 0.0, 0.0}
\definecolor{pakistangreen}{rgb}{0.0, 0.4, 0.0}
\definecolor{palatinateblue}{rgb}{0.15, 0.23, 0.89}
\definecolor{palatinatepurple}{rgb}{0.41, 0.16, 0.38}
\definecolor{paleaqua}{rgb}{0.74, 0.83, 0.9}
\definecolor{paleblue}{rgb}{0.69, 0.93, 0.93}
\definecolor{palebrown}{rgb}{0.6, 0.46, 0.33}
\definecolor{palecarmine}{rgb}{0.69, 0.25, 0.21}
\definecolor{palecerulean}{rgb}{0.61, 0.77, 0.89}
\definecolor{palechestnut}{rgb}{0.87, 0.68, 0.69}
\definecolor{palecopper}{rgb}{0.85, 0.54, 0.4}
\definecolor{palecornflowerblue}{rgb}{0.67, 0.8, 0.94}
\definecolor{palegold}{rgb}{0.9, 0.75, 0.54}
\definecolor{palegoldenrod}{rgb}{0.93, 0.91, 0.67}
\definecolor{palegreen}{rgb}{0.6, 0.98, 0.6}
\definecolor{palemagenta}{rgb}{0.98, 0.52, 0.9}
\definecolor{palepink}{rgb}{0.98, 0.85, 0.87}
\definecolor{paleplum}{rgb}{0.8, 0.6, 0.8}
\definecolor{palered-violet}{rgb}{0.86, 0.44, 0.58}
\definecolor{palerobineggblue}{rgb}{0.59, 0.87, 0.82}
\definecolor{palesilver}{rgb}{0.79, 0.75, 0.73}
\definecolor{palespringbud}{rgb}{0.93, 0.92, 0.74}
\definecolor{paletaupe}{rgb}{0.74, 0.6, 0.49}
\definecolor{paleviolet-red}{rgb}{0.86, 0.44, 0.58}
\definecolor{pansypurple}{rgb}{0.47, 0.09, 0.29}
\definecolor{papayawhip}{rgb}{1.0, 0.94, 0.84}
\definecolor{parisgreen}{rgb}{0.31, 0.78, 0.47}
\definecolor{pastelblue}{rgb}{0.68, 0.78, 0.81}
\definecolor{pastelbrown}{rgb}{0.51, 0.41, 0.33}
\definecolor{pastelgray}{rgb}{0.81, 0.81, 0.77}
\definecolor{pastelgreen}{rgb}{0.47, 0.87, 0.47}
\definecolor{pastelmagenta}{rgb}{0.96, 0.6, 0.76}
\definecolor{pastelorange}{rgb}{1.0, 0.7, 0.28}
\definecolor{pastelpink}{rgb}{1.0, 0.82, 0.86}
\definecolor{pastelpurple}{rgb}{0.7, 0.62, 0.71}
\definecolor{pastelred}{rgb}{1.0, 0.41, 0.38}
\definecolor{pastelviolet}{rgb}{0.8, 0.6, 0.79}
\definecolor{pastelyellow}{rgb}{0.99, 0.99, 0.59}
\definecolor{patriarch}{rgb}{0.5, 0.0, 0.5}
\definecolor{peach}{rgb}{1.0, 0.9, 0.71}
\definecolor{peach-orange}{rgb}{1.0, 0.8, 0.6}
\definecolor{peachpuff}{rgb}{1.0, 0.85, 0.73}
\definecolor{peach-yellow}{rgb}{0.98, 0.87, 0.68}
\definecolor{pear}{rgb}{0.82, 0.89, 0.19}
\definecolor{pearl}{rgb}{0.94, 0.92, 0.84}
\definecolor{peridot}{rgb}{0.9, 0.89, 0.0}
\definecolor{periwinkle}{rgb}{0.8, 0.8, 1.0}
\definecolor{persianblue}{rgb}{0.11, 0.22, 0.73}
\definecolor{persiangreen}{rgb}{0.0, 0.65, 0.58}
\definecolor{persianindigo}{rgb}{0.2, 0.07, 0.48}
\definecolor{persianorange}{rgb}{0.85, 0.56, 0.35}
\definecolor{peru}{rgb}{0.8, 0.52, 0.25}
\definecolor{persianpink}{rgb}{0.97, 0.5, 0.75}
\definecolor{persianplum}{rgb}{0.44, 0.11, 0.11}
\definecolor{persianred}{rgb}{0.8, 0.2, 0.2}
\definecolor{persianrose}{rgb}{1.0, 0.16, 0.64}
\definecolor{persimmon}{rgb}{0.93, 0.35, 0.0}
\definecolor{phlox}{rgb}{0.87, 0.0, 1.0}
\definecolor{phthaloblue}{rgb}{0.0, 0.06, 0.54}
\definecolor{phthalogreen}{rgb}{0.07, 0.21, 0.14}
\definecolor{piggypink}{rgb}{0.99, 0.87, 0.9}
\definecolor{pinegreen}{rgb}{0.0, 0.47, 0.44}
\definecolor{pink}{rgb}{1.0, 0.75, 0.8}
\definecolor{pink-orange}{rgb}{1.0, 0.6, 0.4}
\definecolor{pinkpearl}{rgb}{0.91, 0.67, 0.81}
\definecolor{pinksherbet}{rgb}{0.97, 0.56, 0.65}
\definecolor{pistachio}{rgb}{0.58, 0.77, 0.45}
\definecolor{platinum}{rgb}{0.9, 0.89, 0.89}
\definecolor{plum(traditional)}{rgb}{0.56, 0.27, 0.52}
\definecolor{plum(web)}{rgb}{0.8, 0.6, 0.8}
\definecolor{portlandorange}{rgb}{1.0, 0.35, 0.21}
\definecolor{powderblue(web)}{rgb}{0.69, 0.88, 0.9}
\definecolor{princetonorange}{rgb}{1.0, 0.56, 0.0}
\definecolor{prune}{rgb}{0.44, 0.11, 0.11}
\definecolor{prussianblue}{rgb}{0.0, 0.19, 0.33}
\definecolor{psychedelicpurple}{rgb}{0.87, 0.0, 1.0}
\definecolor{puce}{rgb}{0.8, 0.53, 0.6}
\definecolor{pumpkin}{rgb}{1.0, 0.46, 0.09}
\definecolor{purple(html/css)}{rgb}{0.5, 0.0, 0.5}
\definecolor{purple(munsell)}{rgb}{0.62, 0.0, 0.77}
\definecolor{purple(x11)}{rgb}{0.63, 0.36, 0.94}
\definecolor{purpleheart}{rgb}{0.41, 0.21, 0.61}
\definecolor{purplemountainmajesty}{rgb}{0.59, 0.47, 0.71}
\definecolor{purplepizzazz}{rgb}{1.0, 0.31, 0.85}
\definecolor{purpletaupe}{rgb}{0.31, 0.25, 0.3}
\definecolor{radicalred}{rgb}{1.0, 0.21, 0.37}
\definecolor{raspberry}{rgb}{0.89, 0.04, 0.36}
\definecolor{raspberryglace}{rgb}{0.57, 0.37, 0.43}
\definecolor{raspberrypink}{rgb}{0.89, 0.31, 0.61}
\definecolor{raspberryrose}{rgb}{0.7, 0.27, 0.42}
\definecolor{rawumber}{rgb}{0.51, 0.4, 0.27}
\definecolor{razzledazzlerose}{rgb}{1.0, 0.2, 0.8}
\definecolor{razzmatazz}{rgb}{0.89, 0.15, 0.42}
\definecolor{red}{rgb}{1.0, 0.0, 0.0}
\definecolor{red(munsell)}{rgb}{0.95, 0.0, 0.24}
\definecolor{red(ncs)}{rgb}{0.77, 0.01, 0.2}
\definecolor{red(pigment)}{rgb}{0.93, 0.11, 0.14}
\definecolor{red(ryb)}{rgb}{1.0, 0.15, 0.07}
\definecolor{red-brown}{rgb}{0.65, 0.16, 0.16}
\definecolor{red-violet}{rgb}{0.78, 0.08, 0.52}
\definecolor{redwood}{rgb}{0.67, 0.31, 0.32}
\definecolor{regalia}{rgb}{0.32, 0.18, 0.5}
\definecolor{richblack}{rgb}{0.0, 0.25, 0.25}
\definecolor{richbrilliantlavender}{rgb}{0.95, 0.65, 1.0}
\definecolor{richcarmine}{rgb}{0.84, 0.0, 0.25}
\definecolor{richelectricblue}{rgb}{0.03, 0.57, 0.82}
\definecolor{richlavender}{rgb}{0.67, 0.38, 0.8}
\definecolor{richlilac}{rgb}{0.71, 0.4, 0.82}
\definecolor{richmaroon}{rgb}{0.69, 0.19, 0.38}
\definecolor{riflegreen}{rgb}{0.25, 0.28, 0.2}
\definecolor{robineggblue}{rgb}{0.0, 0.8, 0.8}
\definecolor{rose}{rgb}{1.0, 0.0, 0.5}
\definecolor{rosebonbon}{rgb}{0.98, 0.26, 0.62}
\definecolor{roseebony}{rgb}{0.4, 0.3, 0.28}
\definecolor{rosegold}{rgb}{0.72, 0.43, 0.47}
\definecolor{rosemadder}{rgb}{0.89, 0.15, 0.21}
\definecolor{rosepink}{rgb}{1.0, 0.4, 0.8}
\definecolor{rosequartz}{rgb}{0.67, 0.6, 0.66}
\definecolor{rosetaupe}{rgb}{0.56, 0.36, 0.36}
\definecolor{rosevale}{rgb}{0.67, 0.31, 0.32}
\definecolor{rosewood}{rgb}{0.4, 0.0, 0.04}
\definecolor{rossocorsa}{rgb}{0.83, 0.0, 0.0}
\definecolor{rosybrown}{rgb}{0.74, 0.56, 0.56}
\definecolor{royalazure}{rgb}{0.0, 0.22, 0.66}
\definecolor{royalblue(traditional)}{rgb}{0.0, 0.14, 0.4}
\definecolor{royalblue(web)}{rgb}{0.25, 0.41, 0.88}
\definecolor{royalfuchsia}{rgb}{0.79, 0.17, 0.57}
\definecolor{royalpurple}{rgb}{0.47, 0.32, 0.66}
\definecolor{ruby}{rgb}{0.88, 0.07, 0.37}
\definecolor{ruddy}{rgb}{1.0, 0.0, 0.16}
\definecolor{ruddybrown}{rgb}{0.73, 0.4, 0.16}
\definecolor{ruddypink}{rgb}{0.88, 0.56, 0.59}
\definecolor{rufous}{rgb}{0.66, 0.11, 0.03}
\definecolor{russet}{rgb}{0.5, 0.27, 0.11}
\definecolor{rust}{rgb}{0.72, 0.25, 0.05}
\definecolor{sacramentostategreen}{rgb}{0.0, 0.34, 0.25}
\definecolor{saddlebrown}{rgb}{0.55, 0.27, 0.07}
\definecolor{safetyorange(blazeorange)}{rgb}{1.0, 0.4, 0.0}
\definecolor{saffron}{rgb}{0.96, 0.77, 0.19}
\definecolor{salmon}{rgb}{1.0, 0.55, 0.41}
\definecolor{salmonpink}{rgb}{1.0, 0.57, 0.64}
\definecolor{sand}{rgb}{0.76, 0.7, 0.5}
\definecolor{sanddune}{rgb}{0.59, 0.44, 0.09}
\definecolor{sandstorm}{rgb}{0.93, 0.84, 0.25}
\definecolor{sandybrown}{rgb}{0.96, 0.64, 0.38}
\definecolor{sandytaupe}{rgb}{0.59, 0.44, 0.09}
\definecolor{sangria}{rgb}{0.57, 0.0, 0.04}
\definecolor{sapgreen}{rgb}{0.31, 0.49, 0.16}
\definecolor{sapphire}{rgb}{0.03, 0.15, 0.4}
\definecolor{satinsheengold}{rgb}{0.8, 0.63, 0.21}
\definecolor{scarlet}{rgb}{1.0, 0.13, 0.0}
\definecolor{schoolbusyellow}{rgb}{1.0, 0.85, 0.0}
\definecolor{seagreen}{rgb}{0.18, 0.55, 0.34}
\definecolor{sealbrown}{rgb}{0.2, 0.08, 0.08}
\definecolor{seashell}{rgb}{1.0, 0.96, 0.93}
\definecolor{selectiveyellow}{rgb}{1.0, 0.73, 0.0}
\definecolor{sepia}{rgb}{0.44, 0.26, 0.08}
\definecolor{shadow}{rgb}{0.54, 0.47, 0.36}
\definecolor{shamrockgreen}{rgb}{0.0, 0.62, 0.38}
\definecolor{shockingpink}{rgb}{0.99, 0.06, 0.75}
\definecolor{sienna}{rgb}{0.53, 0.18, 0.09}
\definecolor{silver}{rgb}{0.75, 0.75, 0.75}
\definecolor{sinopia}{rgb}{0.8, 0.25, 0.04}
\definecolor{skobeloff}{rgb}{0.0, 0.48, 0.45}
\definecolor{skyblue}{rgb}{0.53, 0.81, 0.92}
\definecolor{skymagenta}{rgb}{0.81, 0.44, 0.69}
\definecolor{slateblue}{rgb}{0.42, 0.35, 0.8}
\definecolor{slategray}{rgb}{0.44, 0.5, 0.56}
\definecolor{smalt(darkpowderblue)}{rgb}{0.0, 0.2, 0.6}
\definecolor{smokeytopaz}{rgb}{0.58, 0.25, 0.03}
\definecolor{smokyblack}{rgb}{0.06, 0.05, 0.03}
\definecolor{snow}{rgb}{1.0, 0.98, 0.98}
\definecolor{spirodiscoball}{rgb}{0.06, 0.75, 0.99}
\definecolor{splashedwhite}{rgb}{1.0, 0.99, 1.0}
\definecolor{springbud}{rgb}{0.65, 0.99, 0.0}
\definecolor{springgreen}{rgb}{0.0, 1.0, 0.5}
\definecolor{steelblue}{rgb}{0.27, 0.51, 0.71}
\definecolor{stildegrainyellow}{rgb}{0.98, 0.85, 0.37}
\definecolor{straw}{rgb}{0.89, 0.85, 0.44}
\definecolor{sunglow}{rgb}{1.0, 0.8, 0.2}
\definecolor{sunset}{rgb}{0.98, 0.84, 0.65}
\definecolor{tan}{rgb}{0.82, 0.71, 0.55}
\definecolor{tangelo}{rgb}{0.98, 0.3, 0.0}
\definecolor{tangerine}{rgb}{0.95, 0.52, 0.0}
\definecolor{tangerineyellow}{rgb}{1.0, 0.8, 0.0}
\definecolor{taupe}{rgb}{0.28, 0.24, 0.2}
\definecolor{taupegray}{rgb}{0.55, 0.52, 0.54}
\definecolor{teagreen}{rgb}{0.82, 0.94, 0.75}
\definecolor{tearose(orange)}{rgb}{0.97, 0.51, 0.47}
\definecolor{tearose(rose)}{rgb}{0.96, 0.76, 0.76}
\definecolor{teal}{rgb}{0.0, 0.5, 0.5}
\definecolor{tealblue}{rgb}{0.21, 0.46, 0.53}
\definecolor{tealgreen}{rgb}{0.0, 0.51, 0.5}
\definecolor{tenné(tawny)}{rgb}{0.8, 0.34, 0.0}
\definecolor{terracotta}{rgb}{0.89, 0.45, 0.36}
\definecolor{thistle}{rgb}{0.85, 0.75, 0.85}
\definecolor{thulianpink}{rgb}{0.87, 0.44, 0.63}
\definecolor{ticklemepink}{rgb}{0.99, 0.54, 0.67}
\definecolor{tiffanyblue}{rgb}{0.04, 0.73, 0.71}
\definecolor{timberwolf}{rgb}{0.86, 0.84, 0.82}
\definecolor{titaniumyellow}{rgb}{0.93, 0.9, 0.0}
\definecolor{tomato}{rgb}{1.0, 0.39, 0.28}
\definecolor{toolbox}{rgb}{0.45, 0.42, 0.75}
\definecolor{tractorred}{rgb}{0.99, 0.05, 0.21}
\definecolor{trolleygrey}{rgb}{0.5, 0.5, 0.5}
\definecolor{tropicalrainforest}{rgb}{0.0, 0.46, 0.37}
\definecolor{trueblue}{rgb}{0.0, 0.45, 0.81}
\definecolor{tuftsblue}{rgb}{0.28, 0.57, 0.81}
\definecolor{tumbleweed}{rgb}{0.87, 0.67, 0.53}
\definecolor{turkishrose}{rgb}{0.71, 0.45, 0.51}
\definecolor{turquoise}{rgb}{0.19, 0.84, 0.78}
\definecolor{turquoiseblue}{rgb}{0.0, 1.0, 0.94}
\definecolor{turquoisegreen}{rgb}{0.63, 0.84, 0.71}
\definecolor{tuscanred}{rgb}{0.51, 0.21, 0.21}
\definecolor{twilightlavender}{rgb}{0.54, 0.29, 0.42}
\definecolor{tyrianpurple}{rgb}{0.4, 0.01, 0.24}
\definecolor{uablue}{rgb}{0.0, 0.2, 0.67}
\definecolor{uared}{rgb}{0.85, 0.0, 0.3}
\definecolor{ube}{rgb}{0.53, 0.47, 0.76}
\definecolor{uclablue}{rgb}{0.33, 0.41, 0.58}
\definecolor{uclagold}{rgb}{1.0, 0.7, 0.0}
\definecolor{ufogreen}{rgb}{0.24, 0.82, 0.44}
\definecolor{ultramarine}{rgb}{0.07, 0.04, 0.56}
\definecolor{ultramarineblue}{rgb}{0.25, 0.4, 0.96}
\definecolor{ultrapink}{rgb}{1.0, 0.44, 1.0}
\definecolor{umber}{rgb}{0.39, 0.32, 0.28}
\definecolor{unitednationsblue}{rgb}{0.36, 0.57, 0.9}
\definecolor{unmellowyellow}{rgb}{1.0, 1.0, 0.4}
\definecolor{upforestgreen}{rgb}{0.0, 0.27, 0.13}
\definecolor{upmaroon}{rgb}{0.48, 0.07, 0.07}
\definecolor{upsdellred}{rgb}{0.68, 0.09, 0.13}
\definecolor{urobilin}{rgb}{0.88, 0.68, 0.13}
\definecolor{usccardinal}{rgb}{0.6, 0.0, 0.0}
\definecolor{uscgold}{rgb}{1.0, 0.8, 0.0}
\definecolor{utahcrimson}{rgb}{0.83, 0.0, 0.25}
\definecolor{vanilla}{rgb}{0.95, 0.9, 0.67}
\definecolor{vegasgold}{rgb}{0.77, 0.7, 0.35}
\definecolor{venetianred}{rgb}{0.78, 0.03, 0.08}
\definecolor{verdigris}{rgb}{0.26, 0.7, 0.68}
\definecolor{vermilion}{rgb}{0.89, 0.26, 0.2}
\definecolor{veronica}{rgb}{0.63, 0.36, 0.94}
\definecolor{violet}{rgb}{0.56, 0.0, 1.0}
\definecolor{violet(colorwheel)}{rgb}{0.5, 0.0, 1.0}
\definecolor{violet(ryb)}{rgb}{0.53, 0.0, 0.69}
\definecolor{violet(web)}{rgb}{0.93, 0.51, 0.93}
\definecolor{viridian}{rgb}{0.25, 0.51, 0.43}
\definecolor{vividauburn}{rgb}{0.58, 0.15, 0.14}
\definecolor{vividburgundy}{rgb}{0.62, 0.11, 0.21}
\definecolor{vividcerise}{rgb}{0.85, 0.11, 0.51}
\definecolor{vividtangerine}{rgb}{1.0, 0.63, 0.54}
\definecolor{vividviolet}{rgb}{0.62, 0.0, 1.0}
\definecolor{warmblack}{rgb}{0.0, 0.26, 0.26}
\definecolor{wenge}{rgb}{0.39, 0.33, 0.32}
\definecolor{wheat}{rgb}{0.96, 0.87, 0.7}
\definecolor{white}{rgb}{1.0, 1.0, 1.0}
\definecolor{whitesmoke}{rgb}{0.96, 0.96, 0.96}
\definecolor{wildblueyonder}{rgb}{0.64, 0.68, 0.82}
\definecolor{wildstrawberry}{rgb}{1.0, 0.26, 0.64}
\definecolor{wildwatermelon}{rgb}{0.99, 0.42, 0.52}
\definecolor{wisteria}{rgb}{0.79, 0.63, 0.86}
\definecolor{xanadu}{rgb}{0.45, 0.53, 0.47}
\definecolor{yaleblue}{rgb}{0.06, 0.3, 0.57}
\definecolor{yellow}{rgb}{1.0, 1.0, 0.0}
\definecolor{yellow(munsell)}{rgb}{0.94, 0.8, 0.0}
\definecolor{yellow(ncs)}{rgb}{1.0, 0.83, 0.0}
\definecolor{yellow(process)}{rgb}{1.0, 0.94, 0.0}
\definecolor{yellow(ryb)}{rgb}{1.0, 1.0, 0.2}
\definecolor{yellow-green}{rgb}{0.6, 0.8, 0.2}
\definecolor{zaffre}{rgb}{0.0, 0.08, 0.66}
\definecolor{zinnwalditebrown}{rgb}{0.17, 0.09, 0.03}

\begin{document}
\thispagestyle{empty}

\title{Speculative Successive Cancellation\\ Decoding of Polar Codes}

\author{
Ryan Seah, Marvin R\"ubenacke, and Warren J. Gross \\
Department of Electrical and Computer Engineering, McGill University, Montr\'eal, Qu\'ebec, Canada \\
Emails: ryan.seah@mail.mcgill.ca, \{marvin.ruebenacke, warren.gross\}@mcgill.ca

}

\maketitle

\begin{abstract}
Polar codes achieve channel capacity as block length increases, but this asymptotic advantage comes at the cost of decoding speed: the conventional successive cancellation (SC) algorithm is inherently sequential, which limits its practicality for high-throughput applications. This work addresses that limitation by introducing the \emph{Speculative Successive-Cancellation} (Spec-SC) framework, which incorporates speculative execution into the SC decoding process. While node-based decoders with special nodes enable partial parallelism for specific, predefined structures, Spec-SC extends this capability by permitting parallel execution in general nodes that do not match any special-node pattern. A built-in verification mechanism ensures that this speculation does not compromise decoding correctness. Simulation results show that Spec-SC achieves identical BER and FER performance to standard SC decoding, while reducing the average number of sequentially traversed nodes by up to 
$69\%$ for a polar code of length $N=4096$. These results demonstrate that Spec-SC enables additional algorithm-level parallelism beyond what special-node approaches alone can achieve, paving the way for speculative hardware decoder implementations.
\end{abstract}

\begin{IEEEkeywords}
polar codes, successive cancellation, speculative successive cancellation execution
\end{IEEEkeywords}

\section{Introduction}
\IEEEPARstart{P}{olar} codes are the first class of error-correcting codes that are provably capacity-achieving for symmetric binary-input memoryless channels under successive cancellation (SC) decoding \cite{arikan2009channel}. Polar codes have been adopted in 5G New Radio (NR) communications~\cite{3gpp2018multiplexing}. Their recursive structure enables efficient hardware implementations and facilitates systematic extensions to more powerful decoding algorithms, such as successive cancellation list (SCL) decoding~\cite{sarkis2014fast, sarkis2015fast, tal2015list, fan2015low, hashemi2017fast, alamdar2011simplified, balatsoukas2015llr, cocskun2022information} and cyclic redundancy check (CRC)-aided decoding~\cite{ chen2012list, Niu2012CRC, li2012adaptive, sarkis2012data,zhang2015split}.

While early deployments of polar codes have focused on short block lengths suited for control channels, potential next-generation data channel applications increasingly demand longer polar codes, typically with code lengths $N \geq 4096$~\cite{zhang2023channel, geiselhart20236g, proceedings2024trends, yang2026improved}. Recent studies highlight that long polar codes are promising for high-throughput data transmission, where reliability and spectral efficiency are paramount~\cite{wang2023perturbation, li2025enhanced, yang2026improved}. As the code length increases, the polarization effect strengthens, yielding a clearer separation between reliable and unreliable sub-channels, and thus offering improved coding gains and near-capacity performance under SC decoding~\cite{krieg2025longpolarvsldpc}.

However, decoding long polar codes presents significant implementation challenges. The conventional SC decoder follows a strictly sequential decision process, where each bit estimate depends on all previously decoded bits, resulting in inherently limited parallelism, increased decoding latency, and limited throughput as the code length $N$ grows. One effective approach to improving throughput and reducing latency is the exploitation of special nodes (SP nodes)~\cite{alamdar2011simplified, sarkis2014fast, sarkis2015fast, tong2023fast, ren2022sequence}. These nodes correspond to specific structural patterns, such as rate-0, rate-1, repetition, and single-parity-check (SPC) subtrees, that can be decoded in a single operation, thereby collapsing multiple stages of the decoding tree into composite nodes and reducing traversal overhead. Since the number of nodes traversed in the polar decoding tree is directly proportional to the time required for SC decoding, minimizing these traversals directly translates to latency and throughput benefits.

While SP-based decoding offers substantial improvements in throughput and computational efficiency, it does not fundamentally resolve the sequential dependency inherent to the SC algorithm. The presence and distribution of SP nodes depend strongly on the code construction and are often irregular, leaving certain portions of the decoding tree unaffected. Moreover, dependencies between SP nodes persist, preventing full parallel execution and limiting hardware utilization. Consequently, the overall throughput remains dominated by serial segments. 

Other works have attempted to reduce this limitation by modifying the scheduling of the SC decoding process \cite{bossert2025hiddencodewords,kamenev2023highratefirst}. These approaches introduce alternative decoding orders that enable new SC-based decoder structures. However, the dependency between branches remains a central bottleneck, so the decoding process still exhibits significant latency.

To address this limitation, this work proposes a novel decoding architecture called Speculative-SC (Spec-SC). Spec-SC enhances the throughput of long polar codes by introducing speculation into the SC decoding process. The core idea is to start decoding the right branch (speculative branch) of the SC algorithm without waiting on the result of the left branch. A verification mechanism then determines if the speculative decoding branch has been executed correctly. This breaks the strict sequential dependency of the SC decoding tree, enabling multiple decoding operations to proceed in parallel. By allowing speculation, certain nodes can opportunistically behave as temporary special nodes, extending the benefits of parallelism beyond what is achievable with existing SP-node techniques.

Hence, this paper makes the following key contributions:
\begin{itemize}
    \item A speculative decoding framework that integrates branch verification to safely enable parallel execution within the SC decoding tree, extending algorithmic parallelism beyond conventional SP-node structures.
    \item A theoretic proof that guarantees identical error-correction behavior of Spec-SC and SC decoding.
    \item A performance analysis demonstrating that the proposed architecture reduces decoding time while preserving error-correction performance.
\end{itemize}

While speculative techniques have been explored in~\cite{mishra2012successive}, these approaches differ fundamentally from the framework proposed in this work. In particular, what is referred to as “speculation” in~\cite{mishra2012successive} is more accurately characterized as look-ahead decoding, wherein multiple candidate inputs to the right branch of the decoding tree are evaluated in parallel. This process effectively corresponds to a localized brute-force enumeration of possible partial sums. For a subtree of size $N' = 2^m$, this results in $2^m$ hypotheses, and equivalence to standard SC decoding is only guaranteed if all such possibilities are considered. Consequently, these methods are typically restricted to small nodes and do not scale efficiently. In contrast, the proposed Spec-SC evaluates a single tentative right-branch output and verifies it using a sign-consistency condition. %

The following notation is used throughout the paper:
\begin{itemize}
\item Bold uppercase letters (e.g., $\boldsymbol{M}$) denote matrices.
\item Bold lowercase letters (e.g., $\boldsymbol{m}, \boldsymbol{\ell}$) denote vectors. 
\item $\text{sgn}(\cdot)$: sign function.
\item $|\cdot|$: absolute value.
\item $||$: vector concatenation operator.
\item $\oplus$: bitwise XOR operation.
\end{itemize}

The remainder of this paper is organized as follows. \Cref{sec:preliminaries} provides a brief overview of polar codes, SC decoding, and special node-based decoding techniques. \Cref{sec:spec_sc} introduces the Spec-SC decoding framework and its verification mechanism, as well as a theoretic proof of its correctness. \Cref{sec:results} presents simulation results evaluating the performance of the Spec-SC decoder. Finally, \Cref{sec:conclusion} concludes the paper.

\section{Preliminaries}
\label{sec:preliminaries}
The following section provides a brief overview of polar codes, successive cancellation (SC) decoding, and node-based decoding techniques.

\subsection{Polar Codes}
\label{subsec:polar_code}
A polar code~\cite{arikan2009channel} of length $N = 2^n$ and $K$ information bits is denoted by $\mathcal{P}(N,K)$. Its construction partitions the input vector $\boldsymbol{u} \in \{0,1\}^N$ based on channel polarization. The $K$ indices corresponding to the most reliable bit-channels form the information set $\mathcal{I}$, where the information bits $\boldsymbol{u}_{\mathcal{I}}$ are placed. The remaining $N-K$ indices constitute the frozen set $\mathcal{F}$, with their corresponding bits fixed to a predetermined value, usually zero. 

The encoded codeword $\boldsymbol{x}$ is generated through the linear transformation 
\begin{align}
\boldsymbol{x} = \boldsymbol{u} \boldsymbol{G}_N,
\end{align}
where the generator matrix $\boldsymbol{G}_N$ is the $n$-fold Kronecker power of the kernel $\boldsymbol{F} = \left[\begin{smallmatrix} 1 & 0 \\ 1 & 1\end{smallmatrix}\right]$, i.e., $\boldsymbol{G}_N = \boldsymbol{F}^{\otimes n}$. \Cref{fig:pc_construct} illustrates an encoding example of $\mathcal{P}(8, 4)$

\begin{figure}
    \centering
    \includestandalone[width=0.9\linewidth]{figs/pc_construct}
    \caption{Polar encoding graph with $(N, K) = (8, 4)$.}
    \label{fig:pc_construct}
\end{figure}

 For transmission, the codeword is modulated using binary phase-shift keying (BPSK), given by $\boldsymbol{s} = 1 - 2\boldsymbol{x}$. The received vector over an additive white Gaussian noise (AWGN) channel is modeled as
 \begin{align}
\boldsymbol{y} = \boldsymbol{s} + \boldsymbol{n}, \qquad \boldsymbol{n} \sim \mathcal{N}(\boldsymbol{0}, \sigma^2 \boldsymbol{I}),
\end{align}
where $\boldsymbol{n}$ is a noise vector with components drawn from a zero-mean Gaussian distribution with variance $\sigma^2$. The code rate is given by the ratio $R = K / N$.

\subsection{Successive Cancellation (SC) Decoding}
\label{subsec:scdecoding}

The Successive Cancellation (SC) decoder forms the foundation of many polar decoding algorithms \cite{arikan2009channel}. It operates recursively over a decoding tree using log-likelihood ratios (LLRs) to estimate the transmitted bit sequence. The SC algorithm proceeds in a sequential manner, where each node's decision depends on the outcomes of predecessors, resulting in an inherently serial decoding process.

Let ${\ell}_{i,t}$ denote the LLR associated with node $i$ at stage $t$, where the stage index $t$ ranges from $n=\log_2 N$ down to $0$, and the node index $i$ ranges from $0$ to $N - 1$. The process begins at the top of the decoding tree ($t = n$) with the LLRs ${\ell}_{i,n}$ derived from the received channel values, and proceeds recursively down to the leaf stage ($t = 0$), where final bit decisions are made.

At each stage, two update rules are applied to propagate information through the decoding tree. The first update computes the LLR for the left branch using the function $f(\cdot)$, which combines the LLRs from the next higher stage:
\begin{equation}
\label{eq:f-minsum}
{\ell}_{i,t} = f\!\left({\ell}_{i,t+1},\, {\ell}_{i+2^t,t+1}\right),
\end{equation}
where, under the min-sum approximation, $f(\cdot)$ is defined as
\begin{equation}
\label{eq:f}
f({a},{b}) = \operatorname{sgn}({a})\,\operatorname{sgn}({b})\,
                 \min\!\big(|{a}|,\,|{b}|\big).
\end{equation}

Once the decision ${u}_{i,t}$ for the left branch is made, it is used to compute the LLR for the right branch through the function $g(\cdot)$:
\begin{equation}
\label{eq:g}
{\ell}_{i+2^t,t} = g\!\left(\, {\ell}_{i,t+1},\, {\ell}_{i+2^t,t+1}, {u}_{i,t}\right),
\end{equation}
where
\begin{equation}
\label{eq:gdef}
g({a}, {b}, {c}) = (1 - 2{c}) \cdot {a} + {b}.
\end{equation}
The functions $f(\cdot)$ and $g(\cdot)$ are defined for scalar inputs but can also be applied element-wise when operating on vectors. 

When the traversal reaches the leaf stage ($t = 0$), the decoder makes a hard decision for each bit ${u}_{i,0}$. If the bit index $i$ belongs to the frozen set $\mathcal{F}$, its value is predetermined as ${u}_{i,0} = 0$. If it belongs to the information set $\mathcal{I}$, the decision is based on the sign of the corresponding LLR:
\begin{equation}
\hat{u}_{i,0} =
\begin{cases}
0, & \text{if }  {\ell}_{i,0} \ge 0, \\
1, & \text{otherwise}.
\end{cases}
\end{equation}
At the parent node, the estimates are combined as
\begin{equation}
    \hat{\boldsymbol{u}}_\mathrm{p} = (\hat{\boldsymbol{u}}_\mathrm{l} \oplus \hat{\boldsymbol{u}}_\mathrm{r}) \,||\, \hat{\boldsymbol{u}}_\mathrm{r}.
\end{equation}
This recursive process continues until all $N$ bits are decoded in the order $\hat{\boldsymbol{u}}_{\mathrm{l}} \rightarrow \hat{\boldsymbol{u}}_{\mathrm{r}} \rightarrow \hat{\boldsymbol{u}}_{\mathrm{p}}$, as shown in \Cref{fig:spec_sc_algo} (left).

\subsection{Special Node-Based Decoding}
\label{subsec:node_based_decoding}
To increase the throughput associated with traversing the full decoding tree, the structure can be pruned by identifying and collapsing special nodes based on their frozen bit patterns, also known as rate profile. A vector $\boldsymbol{d}_{i,t}$ is used to identify the nature of the leaf bits descending from a node at stage $t$ and index $i$. In this vector, $\boldsymbol{d}_{i,t}[j] = 0$ signifies a frozen bit, while $\boldsymbol{d}_{i, t}[j] = 1$ indicates an information bit.

Following the simplified decoder designs in \cite{alamdar2011simplified, sarkis2014fast, tong2023fast}, this work implements the following eight special nodes:
\begin{enumerate}
    \item \makebox[1.5cm][l]{\textbf{R0}}  $\boldsymbol{d}_i = (0, 0, \dots, 0, 0)$
    \item \makebox[1.5cm][l]{\textbf{R1}}  $\boldsymbol{d}_i = (1, 1, \dots, 1, 1)$
    \item \makebox[1.5cm][l]{\textbf{REP}} $\boldsymbol{d}_i = (0, 0, \dots, 0, 1)$
    \item \makebox[1.5cm][l]{\textbf{SPC}} $\boldsymbol{d}_i = (0, 1, \dots, 1, 1)$
    \item \makebox[1.5cm][l]{\textbf{REP-2}} $\boldsymbol{d}_i = (0, 0, \dots, 0, 0, 1, 1)$
    \item \makebox[1.5cm][l]{\textbf{SPC-2}} $\boldsymbol{d}_i = (0, 0, 1, 1,\dots, 1, 1)$
    \item \makebox[1.5cm][l]{\textbf{RPC}} $\boldsymbol{d}_i = (0, 0, 0, 1,\dots, 1, 1)$
    \item \makebox[1.5cm][l]{\textbf{PCR}} $\boldsymbol{d}_i = (0, 0, \dots, 0, 1, 1, 1)$
\end{enumerate}
The specialized nodes correspond to component codes with low-complexity maximum-likelihood decoders operating directly on the LLR vector. This enables to skip the recursive for entire subtrees, thereby accelerating the overall decoding process. 
\begin{figure*}
    \centering
    \includestandalone[width=0.9\linewidth]{figs/tree_traversal}
    \caption{Comparison between standard Successive-Cancellation (SC) traversal and Speculative-SC (Spec-SC) traversal. In SC decoding (left), node decisions are made sequentially in the order $\hat{\boldsymbol{u}}_{\mathrm{l}} \rightarrow\hat{\boldsymbol{u}}_{\mathrm{r}} \rightarrow \hat{\boldsymbol{u}}_{\mathrm{p}}$. In contrast, Spec-SC (right) simultaneously traverse $\hat{\boldsymbol{u}}_{\mathrm{l}}$ and $\hat{\boldsymbol{u}}_{\mathrm{r}}$, where $\hat{\boldsymbol{u}}_{\mathrm{r}} = \boldsymbol{\bar{u}}_{\mathrm{r}}$ is being speculated. The function $S$ validates whether the speculation is successful. If successful, $\hat{\boldsymbol{u}}_{\mathrm{p}}$ is computed directly; otherwise the process reverts to the standard SC traversal path.}
    \label{fig:spec_sc_algo}
\end{figure*}
\section{Speculative Successive Cancellation Decoding}
\label{sec:spec_sc}

The Speculative-SC (Spec-SC) decoding framework, illustrated in~\Cref{fig:spec_sc_algo} (right), addresses the sequential dependency of conventional SC decoding by enabling partial parallel traversal within the decoding tree. Rather than waiting for the left child node $\boldsymbol{u}_{\mathrm{l}}$ to be fully decoded before proceeding, Spec-SC performs simultaneous traversal of both subnodes, $\boldsymbol{u}_{\mathrm{l}}$ and $\boldsymbol{u}_{\mathrm{r}}$. The input LLR vector at the node is represented as
\begin{align}
\boldsymbol{\ell} = \boldsymbol{\ell}_{\mathrm{l}} \, || \, \boldsymbol{\ell}_{\mathrm{r}}
\end{align}
where $\boldsymbol{\ell}_{\mathrm{r}}$ is used to generate a speculative decision $\boldsymbol{\bar{u}}_{\mathrm{r}}$. 

After both tentative decisions are available, the LLR for the right branch is updated as
\begin{equation}
\boldsymbol{\ell}'_{\mathrm{r}} = g(\boldsymbol{\ell}_{\mathrm{l}}, \boldsymbol{\ell}_{\mathrm{r}}, \hat{\boldsymbol{u}}_{\mathrm{l}}).
\end{equation}
To validate the correctness of the speculation, two sign vectors are defined:
\begin{align}
\boldsymbol{q} &= \operatorname{sgn}(\boldsymbol{\ell}'_{\mathrm{r}}), \\
\boldsymbol{p} &= \operatorname{sgn}(1 - 2\boldsymbol{\bar{u}}_{\mathrm{r}}).
\end{align}
The verification function $\delta(\cdot)$ is then evaluated as
\begin{equation}
\delta(\boldsymbol{q}, \boldsymbol{p}) =
\begin{cases}
1, & \text{if } \boldsymbol{q} = \boldsymbol{p}, \\
0, & \text{otherwise}.
\end{cases}
\end{equation}
If $\delta(\cdot) = 1$, the speculation is deemed successful, and the parent node decision is computed directly as
\begin{equation}
\hat{\boldsymbol{u}}_{\mathrm{p}} = \left (\hat{\boldsymbol{u}}_{\mathrm{l}} \oplus \boldsymbol{\bar{u}}_{\mathrm{r}} \right ) \, || \, \boldsymbol{\bar{u}}_{\mathrm{r}}.
\end{equation}
If $\delta(\cdot) = 0$, the speculation fails, and the decoder reverts to the standard SC traversal to ensure decoding correctness.

The following lemma shows that this sign test is sufficient to certify the speculative right-branch decision.

\begin{lemma}
\label{lemma}
Let $\boldsymbol{\ell} = \boldsymbol{\ell}_{\mathrm{l}} \, || \, \boldsymbol{\ell}_{\mathrm{r}}$, and let
$\bar{\boldsymbol{u}}_{\mathrm{r}}$ be a Spec-SC candidate for a polar code of length $N$. After decoding the left tree, the right-branch LLR is given by $\boldsymbol{\ell}'_{\mathrm{r}} = g(\boldsymbol{\ell}_{\mathrm{l}}, \boldsymbol{\ell}_{\mathrm{r}}, \hat{\boldsymbol{u}}_{\mathrm{l}})$. If
\begin{equation}
    \operatorname{sgn}(1-2\bar{\boldsymbol{u}}_{\mathrm{r}})
    =
    \operatorname{sgn}\!\bigl(\boldsymbol{\ell}'_{\mathrm{r}}),
\end{equation}
then SC decoding of the right-branch with input LLR $\boldsymbol{\ell}_{\mathrm{r}}$, outputs the candidate $\boldsymbol{u}_{\mathrm{r}} = \bar{\boldsymbol{u}}_{\mathrm{r}}$.
\end{lemma}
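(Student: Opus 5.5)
The plan is to reduce the lemma to a general \emph{fixed-point property} of min-sum SC decoding, and prove that property by induction on the subtree length. Throughout, I read $\bar{\boldsymbol{u}}_{\mathrm{r}}$ in the partial-sum (codeword) domain, as in the combination rule $\hat{\boldsymbol{u}}_{\mathrm{p}} = (\hat{\boldsymbol{u}}_{\mathrm{l}} \oplus \bar{\boldsymbol{u}}_{\mathrm{r}}) \,||\, \bar{\boldsymbol{u}}_{\mathrm{r}}$. Being a Spec-SC candidate means that $\bar{\boldsymbol{u}}_{\mathrm{r}}$ is the output of an SC decoding of the right subtree. It is therefore a codeword of the constituent polar code defined by the right subtree's rate profile $\boldsymbol{d}$.

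The LLR fed to the right subtree in genuine SC decoding is $\boldsymbol{\ell}'_{\mathrm{r}}$. So it suffices to prove the following claim.

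\emph{Claim.} Let $\boldsymbol{\lambda}$ be the input LLR vector of a subtree of length $2^m$ with rate profile $\boldsymbol{d}$, and let $\boldsymbol{c}$ be a codeword of that subtree. If $\operatorname{sgn}(\boldsymbol{\lambda}) = \operatorname{sgn}(1-2\boldsymbol{c})$ componentwise, then SC decoding with min-sum $f$ and with $g$ outputs $\boldsymbol{c}$.

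Since $\operatorname{sgn}(1-2\boldsymbol{c}) \in \{\pm 1\}^{2^m}$, the hypothesis forces every entry of $\boldsymbol{\lambda}$ to be nonzero. This removes any tie-breaking issue at $\ell = 0$. Applying the claim with $\boldsymbol{\lambda} = \boldsymbol{\ell}'_{\mathrm{r}}$ and $\boldsymbol{c} = \bar{\boldsymbol{u}}_{\mathrm{r}}$ gives the lemma.

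\emph{Base case} ($m=0$). If the leaf is frozen, the only codeword is $c=0$, so $\lambda>0$ and the decoder outputs the frozen value $0$. If the leaf is an information leaf, the hard decision returns $0$ exactly when $\lambda \ge 0$, i.e.\ it returns $c$.

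\emph{Inductive step.} Write $\boldsymbol{\lambda} = \boldsymbol{\lambda}_a \,||\, \boldsymbol{\lambda}_b$ and $\boldsymbol{c} = (\boldsymbol{c}_a \oplus \boldsymbol{c}_b) \,||\, \boldsymbol{c}_b$, where $\boldsymbol{c}_a$ and $\boldsymbol{c}_b$ are codewords of the left and right child subcodes (by the recursive structure of $\boldsymbol{G}_N$).

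For the left child, each entry of $f(\boldsymbol{\lambda}_a,\boldsymbol{\lambda}_b)$ has sign $\operatorname{sgn}(\lambda_a)\operatorname{sgn}(\lambda_b)$. This corresponds to the bit $(c_a \oplus c_b) \oplus c_b = c_a$. Its magnitude $\min(|\lambda_a|,|\lambda_b|)$ is nonzero. By the induction hypothesis, the left child therefore returns exactly $\boldsymbol{c}_a$.

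For the right child, the decoder computes $g = (1-2c_a)\lambda_a + \lambda_b$. The term $(1-2c_a)\lambda_a$ has sign $(1-2c_a)(1-2(c_a\oplus c_b)) = 1-2c_b$. The term $\lambda_b$ has the same sign $1-2c_b$. The sum of two nonzero reals of equal sign keeps that sign and is nonzero. Hence the right input is sign-consistent with $\boldsymbol{c}_b$, and the induction hypothesis returns $\boldsymbol{c}_b$. Recombining the two child outputs yields $(\boldsymbol{c}_a\oplus\boldsymbol{c}_b)\,||\,\boldsymbol{c}_b = \boldsymbol{c}$.

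\emph{Main obstacle.} The inductive sign bookkeeping is routine. The step that needs care is making precise what ``Spec-SC candidate'' guarantees, namely that $\bar{\boldsymbol{u}}_{\mathrm{r}}$ is a codeword of the right subcode. If it were not, the frozen-leaf base case would fail. I would justify this directly from the fact that $\bar{\boldsymbol{u}}_{\mathrm{r}}$ is produced by SC (or special-node) decoding of the right subtree, and every such decoder outputs re-encoded codewords of its subcode.

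A second point to handle is the special nodes. If the right subtree is collapsed into a special node, its ML decoder must also have the fixed-point property: when the hard decision of its input is already a codeword, it must return that codeword. This holds because the hard-decision codeword then maximizes the correlation $\sum_j |\lambda_j|$ over all codewords.
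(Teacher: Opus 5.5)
Your proof is correct and follows the paper's argument: an induction on subtree length that pushes the sign identity $(1-2(a\oplus b))(1-2b)=1-2a$ through $f$ and $g$. Like the paper's proof, and despite the statement's wording, you apply it to the true right-branch input $\boldsymbol{\ell}'_{\mathrm{r}}$; your explicit requirement that $\bar{\boldsymbol{u}}_{\mathrm{r}}$ be a codeword of the right subcode, together with your note that the hypothesis forces nonzero LLRs, makes rigorous the frozen-leaf base case that the paper dismisses as ``immediate from the SC hard decision rule''.
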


\begin{proof}
By induction. 
The base case $\operatorname{length}(\boldsymbol{\ell}'_{\mathrm{r}}=1$ is immediate from the SC hard decision rule.

For the induction step, let
\begin{equation}
    \boldsymbol{\ell}'_{\mathrm{r}}
    =
    \boldsymbol{\ell}_{\mathrm{rl}}\,||\,\boldsymbol{\ell}_{\mathrm{rr}},
    \qquad
    \bar{\boldsymbol{u}}_{\mathrm{r}}
    =
    (\bar{\boldsymbol{u}}_{\mathrm{rl}}\oplus \bar{\boldsymbol{u}}_{\mathrm{rr}})
    \,||\,
    \bar{\boldsymbol{u}}_{\mathrm{rr}}.
\end{equation}
Then, componentwise,
\begin{align}
    \operatorname{sgn}(\boldsymbol{\ell}_{\mathrm{rl}}[i])
    &=1-2(\bar{\boldsymbol{u}}_{\mathrm{rl}}[i]\oplus \bar{\boldsymbol{u}}_{\mathrm{rr}}[i]),\\
    \operatorname{sgn}(\boldsymbol{\ell}_{\mathrm{rr}}[i])
    &=1-2\bar{\boldsymbol{u}}_{\mathrm{rr}}[i].
\end{align}

The left-child input is
\begin{equation}
    \boldsymbol{\ell}'_{\mathrm{rl}}=f(\boldsymbol{\ell}_{\mathrm{rl}},\boldsymbol{\ell}_{\mathrm{rr}}).
\end{equation}
Using $\operatorname{sgn}(f(a,b))=\operatorname{sgn}(a)\operatorname{sgn}(b)$ and
\begin{equation}
\label{eqn:xor_sign}
    (1-2(a\oplus b))(1-2b)=1-2a,
\end{equation}
we obtain
\begin{equation}
    \operatorname{sgn}(\boldsymbol{\ell}'_{\mathrm{rl}})
    =
    \operatorname{sgn}(1-2\bar{\boldsymbol{u}}_{\mathrm{rl}}).
\end{equation}
Hence, by the induction hypothesis, SC decoding of $\boldsymbol{\ell}'_{\mathrm{rl}}$ returns
\begin{equation}
    \hat{\boldsymbol{u}}_{\mathrm{rl}}=\bar{\boldsymbol{u}}_{\mathrm{rl}}.
\end{equation}

Next, the right-child input is
\begin{equation}
    \boldsymbol{\ell}'_{\mathrm{rr}}
    =
    g(\boldsymbol{\ell}_{\mathrm{rl}},\boldsymbol{\ell}_{\mathrm{rr}}, \hat{\boldsymbol{u}}_{\mathrm{rl}})
    =
    g(\boldsymbol{\ell}_{\mathrm{rl}},\boldsymbol{\ell}_{\mathrm{rr}}, \bar{\boldsymbol{u}}_{\mathrm{rl}}),
\end{equation}
where the second equality follows from $\hat{\boldsymbol{u}}_{\mathrm{rl}}=\bar{\boldsymbol{u}}_{\mathrm{rl}}$. Componentwise,
\begin{equation}
    \boldsymbol{\ell}'_{\mathrm{rr}}[i]
    =
    (1-2\bar{\boldsymbol{u}}_{\mathrm{rl}}[i])\boldsymbol{\ell}_{\mathrm{rl}}[i]
    +
    \boldsymbol{\ell}_{\mathrm{rr}}[i].
\end{equation}
Using~\Cref{eqn:xor_sign}, we get
\begin{equation}
    \operatorname{sgn}\!\left((1-2\bar{\boldsymbol{u}}_{\mathrm{rl}}[i])\boldsymbol{\ell}_{\mathrm{rl}}[i]\right)
    =
    1-2\bar{\boldsymbol{u}}_{\mathrm{rr}}[i].
\end{equation}
Since also
\begin{equation}
    \operatorname{sgn}(\boldsymbol{\ell}_{\mathrm{rr}}[i])
    =
    1-2\bar{\boldsymbol{u}}_{\mathrm{rr}}[i],
\end{equation}
both terms in $\boldsymbol{\ell}'_{\mathrm{rr}}[i]$ have the same sign, so
\begin{equation}
    \operatorname{sgn}(\boldsymbol{\ell}'_{\mathrm{rr}})
    =
    \operatorname{sgn}(1-2\bar{\boldsymbol{u}}_{\mathrm{rr}}).
\end{equation}
Applying the induction hypothesis again yields
\begin{equation}
    \hat{\boldsymbol{u}}_{\mathrm{rr}}=\bar{\boldsymbol{u}}_{\mathrm{rr}}.
\end{equation}

Finally, SC combines the child outputs as
\begin{equation}
    \hat{\boldsymbol{u}}_{\mathrm{r}}
    =
    \left(\hat{\boldsymbol{u}}_{\mathrm{rl}}\oplus \hat{\boldsymbol{u}}_{\mathrm{rr}}\right)
    \,||\,
    \hat{\boldsymbol{u}}_{\mathrm{rr}}
    =
    \bar{\boldsymbol{u}}_{\mathrm{r}}.
\end{equation}
\end{proof}

This then leads us to the theorem,
\begin{theorem}
    For any input LLRs, $\boldsymbol{\ell}$, the output of the Spec-SC and SC decoding are identical, i.e. $\operatorname{Spec-SC}(\boldsymbol{\ell}) = \operatorname{SC}(\boldsymbol{\ell})$. 
\end{theorem}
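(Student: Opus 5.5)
We argue by strong induction on the length $N'=2^m$ of the (sub)tree rooted at the node being decoded, proving that for every node and every input LLR vector $\boldsymbol{\ell}$, the estimate $\hat{\boldsymbol{u}}_{\mathrm{p}}$ returned by Spec-SC equals the one returned by SC. We treat Spec-SC as a recursive procedure: at each node it either decodes directly (leaf or special node) or splits into left and right children with the speculative mechanism of this section. The claim for the whole code is the case $N'=N$.

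\emph{Base cases.} At a leaf, both decoders apply the same hard decision and the same frozen-bit rule, so the outputs agree. At a special node, Spec-SC and SC (in its node-based form of Section~\ref{subsec:node_based_decoding}) invoke the same dedicated decoder on the same LLR vector, so the outputs again agree. If one insists on comparing against the plain bit-wise SC, this step would require that each special-node decoder reproduces SC exactly. We would take that as the standing convention rather than reprove it.

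\emph{Induction step.} Consider a general node with $\boldsymbol{\ell}=\boldsymbol{\ell}_{\mathrm{l}}\,||\,\boldsymbol{\ell}_{\mathrm{r}}$.
\begin{enumerate}
\item The left child receives the same input in both decoders, namely $f(\boldsymbol{\ell}_{\mathrm{l}},\boldsymbol{\ell}_{\mathrm{r}})$. The left branch's own decoding is independent of the speculation at this node, so by the induction hypothesis Spec-SC and SC produce the same $\hat{\boldsymbol{u}}_{\mathrm{l}}$.
\item Consequently $\boldsymbol{\ell}'_{\mathrm{r}}=g(\boldsymbol{\ell}_{\mathrm{l}},\boldsymbol{\ell}_{\mathrm{r}},\hat{\boldsymbol{u}}_{\mathrm{l}})$ is identical in both decoders, and we split on $\delta(\boldsymbol{q},\boldsymbol{p})$.
\item If $\delta=0$, Spec-SC falls back to SC traversal of the right child with input $\boldsymbol{\ell}'_{\mathrm{r}}$. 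Any speculation nested inside that child is covered by the induction hypothesis, since the child has length $N'/2$. Hence the right output equals SC's $\hat{\boldsymbol{u}}_{\mathrm{r}}$.
\item If $\delta=1$, Lemma~\ref{lemma} gives that SC decoding of the right branch returns exactly $\bar{\boldsymbol{u}}_{\mathrm{r}}$, so $\hat{\boldsymbol{u}}_{\mathrm{r}}=\bar{\boldsymbol{u}}_{\mathrm{r}}$.
\item In either case both decoders combine through $(\hat{\boldsymbol{u}}_{\mathrm{l}}\oplus\hat{\boldsymbol{u}}_{\mathrm{r}})\,||\,\hat{\boldsymbol{u}}_{\mathrm{r}}$, so the parent outputs coincide.
\end{enumerate}

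The main obstacle is applying Lemma~\ref{lemma} in the right setting. As stated and proved, the lemma concerns plain SC on a subtree with all leaves decided by sign, so it implicitly treats every leaf as an information bit. Frozen leaves and special nodes in the right subtree could break the conclusion. For example, a frozen leaf forces $0$ even if $\boldsymbol{\ell}'_{\mathrm{r}}$ has negative sign there, and an SPC node may flip a bit. My first attempt is to require, as part of the definition of the candidate, that $\bar{\boldsymbol{u}}_{\mathrm{r}}$ be a valid output of the right subtree, meaning consistent with its frozen set. I would then strengthen the lemma's induction so that, at every descendant node, the propagated LLR signs equal $1-2$ times the corresponding partial candidate. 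For a frozen leaf, consistency forces the candidate bit to be $0$, so the sign match gives a nonnegative LLR, and SC's forced $0$ agrees. For rate-1, SPC, and repetition nodes, an input whose hard-decision vector is already a codeword of the component code is returned unchanged by their ML/Wagner decoders. This is the step I expect to need the most care, particularly with ties at $\operatorname{sgn}(0)$. Once this robustness is in place, the induction above closes and yields $\operatorname{Spec-SC}(\boldsymbol{\ell})=\operatorname{SC}(\boldsymbol{\ell})$.
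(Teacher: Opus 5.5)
Your proposal is correct and follows the paper's route. The left-branch outputs coincide, the fallback case coincides, and the success case is settled by Lemma~\ref{lemma}. Your explicit strong induction on subtree length makes rigorous the recursion that the paper's contradiction argument leaves implicit, namely that Spec-SC is called recursively on the left child and again in the fallback.

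Your strengthening of the lemma to cover frozen leaves, special-node decoders, and $\operatorname{sgn}(0)$ ties fixes gaps the paper's base case ignores. The codeword requirement you place on the candidate holds automatically, because \textsc{FastSimplifedSC} run with the right-half rate profile always returns a codeword of the right subcode.
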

\begin{proof}
    Proof by contradiction.
    
    Assume that $\bar{\boldsymbol{u}} \neq \hat{\boldsymbol{u}}$, where $\bar{\boldsymbol{u}} = (\bar{\boldsymbol{u}}_{\mathrm{l}} \oplus \bar{\boldsymbol{u}}_{\mathrm{r}}) || \bar{\boldsymbol{u}}_{\mathrm{r}}$ and $ \hat{\boldsymbol{u}} = ( \hat{\boldsymbol{u}}_{\mathrm{l}} \oplus  \hat{\boldsymbol{u}}_{\mathrm{r}})||  \hat{\boldsymbol{u}}_{\mathrm{r}}$ are outputs of the Spec-SC and SC, respectively. However, we know that $\bar{\boldsymbol{u}}_{\mathrm{l}} = \hat{\boldsymbol{u}}_{\mathrm{l}}$ as the left decoding of Spec-SC is the same as SC. Thus, any difference between Spec-SC and SC can only come from the right-branch output.

    If the validation test fails, Spec-SC discards the speculative right-branch output and reverts to standard SC decoding. In that case, the right-branch outputs are identical, which contradicts the assumption that the two decoders differ.

    It remains to consider the case where the validation test succeeds. Then $\delta(\boldsymbol{q},\boldsymbol{p})=1$ and $\operatorname{sgn}(\boldsymbol{\ell}_{\mathrm{r}}) = \operatorname{sgn}(1-\bar{\boldsymbol{u}}_{\mathrm{r}})$. This means that the hard-decision vector associated with $\boldsymbol{\ell}_{\mathrm{r}}$ is already the candidate $\bar{\boldsymbol{u}}_{\mathrm{r}}$. Since the polar transform at the right subtree is one-to-one, this candidate corresponds to a unique valid right-branch decision. Therefore, SC decoding of the right branch with input LLR $\boldsymbol{\ell}_{\mathrm{r}}$ cannot produce a different output. Equivalently, by Lemma~\ref{lemma}, SC decoding of $\boldsymbol{\ell}_{\mathrm{r}}$ outputs $\bar{\boldsymbol{u}}_{\mathrm{r}}$. Thus,
    $ \hat{\boldsymbol{u}}_{\mathrm{r}} = \bar{\boldsymbol{u}}_{\mathrm{r}}.$

    Therefore, this gives
    \begin{equation}
        \hat{\boldsymbol{u}}_{\mathrm{p}}
        =
        ( \hat{\boldsymbol{u}}_{\mathrm{l}}
        \oplus
         \hat{\boldsymbol{u}}_{\mathrm{r}})
        \,||\,
        \hat{\boldsymbol{u}}_{\mathrm{r}}
        =
        (\bar{\boldsymbol{u}}_{\mathrm{l}}
        \oplus
        \bar{\boldsymbol{u}}_{\mathrm{r}})
        \,||\,
        \bar{\boldsymbol{u}}_{\mathrm{r}}
        =
        \bar{\boldsymbol{u}}_{\mathrm{p}}.
    \end{equation}
    This contradicts the assumption that
    $\bar{\boldsymbol{u}}_{\mathrm{p}}\neq\boldsymbol{u}_{\mathrm{p}}$. Hence, the
    original assumption is false, and Spec-SC and SC produce
    identical outputs for any input LLR vector $\boldsymbol{\ell}$.
\end{proof}

\Cref{alg:specsc} summarizes the Spec-SC decoding procedure as pseudo code.

\begin{algorithm}[t]
\caption{Speculative successive-cancellation decoding}
\label{alg:specsc}
\begin{algorithmic}[1]
\Require Channel LLR vector $\boldsymbol{\ell}$, rate profile
         $\boldsymbol{d}$, and set of leaf decoders $\mathcal{D}$
\Ensure Codeword estimate $\hat{\boldsymbol{x}}$

\State $\hat{\boldsymbol{x}}
       \gets \Call{Spec-SC}
       {\boldsymbol{\ell},\boldsymbol{d}}$
\AlgRule

\Procedure{Spec-SC}{$\boldsymbol{\ell},\boldsymbol{d}$}
    \State $N \gets \operatorname{length}(\boldsymbol{\ell})$

    \If{$\boldsymbol{d}$ matches a leaf decoder in $\mathcal{D}$}
        \State \Return
        $\Call{Decode}{\boldsymbol{\ell},\boldsymbol{d}}$
    \EndIf

    \Statex \hspace{\algorithmicindent}
        \textit{Execute Lines 7 and 8 concurrently:}

    \State $\hat{\boldsymbol{u}}_{\mathrm{l}}
           \gets
           \Call{Spec-SC}
           {f(\boldsymbol{\ell}_{\mathrm{l}}, \ \boldsymbol{\ell}_{\mathrm{r}}),
            \boldsymbol{d}_{0:\nicefrac{N}{2}}}$

    \State $\widetilde{\boldsymbol{u}}_{\mathrm{r}}
           \gets
           \Call{FastSimplifedSC}
           {\boldsymbol{\ell}_{\mathrm{r}},\boldsymbol{d}_{\nicefrac{N}{2}:N}}$

    \State $\boldsymbol{\ell}'_{\mathrm{r}}
           \gets
           g(\boldsymbol{\ell}_{\mathrm{l}},
             \boldsymbol{\ell}_{\mathrm{r}},
             \hat{\boldsymbol{u}}_{\mathrm{l}})$

    \If{$\widetilde{\boldsymbol{u}}_{\mathrm{r}}
         =
         \operatorname{hard}(\boldsymbol{\ell}'_{\mathrm{r}})$}
        \Comment{Speculation succeeds}
        \State $\hat{\boldsymbol{u}}_{\mathrm{r}}
               \gets \widetilde{\boldsymbol{u}}_{\mathrm{r}}$
    \Else
        \Comment{Speculation fails}
        \State $\hat{\boldsymbol{u}}_{\mathrm{r}}
               \gets
               \Call{Spec-SC}
               {\boldsymbol{\ell}'_{\mathrm{r}},
                \boldsymbol{d}_{\nicefrac{N}{2}:N}}$
    \EndIf

    \State \Return
    $(\hat{\boldsymbol{u}}_{\mathrm{l}}\oplus\hat{\boldsymbol{u}}_{\mathrm{r}})
     \,\Vert\,\hat{\boldsymbol{u}}_{\mathrm{r}}$
\EndProcedure
\AlgRule
\Procedure{FastSimplifedSC}{$\boldsymbol{\ell},\boldsymbol{d},\mathcal{D}$}
    \State $N \gets \operatorname{length}(\boldsymbol{\ell})$

    \If{$\boldsymbol{d}$ matches a leaf decoder in $\mathcal{D}$}
        \State \Return
        $\Call{Decode}{\boldsymbol{\ell},\boldsymbol{d}}$
    \EndIf

    \State $\hat{\boldsymbol{u}}_{\mathrm{l}}
           \gets
           \Call{FastSimplifedSC}
           {f(\boldsymbol{\ell}_{\mathrm{l}}, \ \boldsymbol{\ell}_{\mathrm{r}}),
            \boldsymbol{d}_{0:\nicefrac{N}{2}}}$

    \State $\hat{\boldsymbol{u}}_{\mathrm{r}}
           \gets
           \Call{FastSimplifedSC}
           {g(\boldsymbol{\ell}_{\mathrm{l}}, \ \boldsymbol{\ell}_{\mathrm{r}}, \hat{\boldsymbol{u}}_{\mathrm{l}}),
            \boldsymbol{d}_{\nicefrac{N}{2}:N}}$

    \State \Return
    $(\hat{\boldsymbol{u}}_{\mathrm{l}}\oplus\hat{\boldsymbol{u}}_{\mathrm{r}})
     \,\Vert\,\hat{\boldsymbol{u}}_{\mathrm{r}}$
\EndProcedure

\end{algorithmic}
\end{algorithm}

Another possible way to verify the correctness of the speculative results would be to introduce segmented CRC, such as those proposed in~\cite{li2025enhanced,zhou2016segmented,hashemi2016partion, hashemi2017partition}, where this verification could be performed directly on each decoded segment. 

\section{Results and Simulations}
\label{sec:results}
This section presents simulation results evaluating the performance of the proposed Spec-SC decoder in terms of error-correction capability and average number of nodes traversed in the decoding tree.

\begin{figure*}[t]
    \centering
    \hspace{25pt}\ref{legend:polar_ber_4k}\\\vspace{5pt}
    \subfigure[$N=4096$]{
        \label{fig:ber_4k}
        \begin{tikzpicture}
    \pgfplotsset{
        label style = {font=\fontsize{9pt}{7.2}\selectfont},
        tick label style = {font=\fontsize{7pt}{7.2}\selectfont}
    }
    \begin{axis}[
        scale = 1,
        ymode=log,
        minor x tick num=9,
        xmin=0.85,
        xmax=6.1,
        xlabel={$E_\mathrm{b}/N_0$ (dB)}, 
        ylabel={BER}, 
        ytick={1e-0, 1e-1, 1e-2, 1e-3, 1e-4, 1e-5, 1e-6, 1e-7, 1e-8},
        grid=both,
        ymajorgrids=true,
        xmajorgrids=true,
        grid style=dashed,
        width=0.9\columnwidth,
        height=0.75\columnwidth,
        legend style={
            anchor=center,
            cells={anchor=west},
            column sep= 2mm,
            thick,
            font=\fontsize{7pt}{7.2}\selectfont,
        },
        legend to name=legend:polar_ber_4k,
        legend columns=3,
        legend pos=north east,
    ]

        \addlegendimage{legend image with text={$\mathcal{P}(4096, 1024)$}}
        \addlegendentry{}
        \addlegendimage{legend image with text={$\mathcal{P}(4096, 2048)$}}
        \addlegendentry{}
        \addlegendimage{legend image with text={$\mathcal{P}(4096, 3584)$}}
        \addlegendentry{}
        
        \addplot[smooth, solid, color=matplottikz-color1, thick, mark=square, mark size=2, every mark/.append style={solid}]
        table{
        1.0 1.1916e-01 
        2.0 7.6671e-04 
        3.0 5.1917e-08 
        };
        \addlegendentry{}
        
        \addplot[smooth, dashed, color=matplottikz-color1, thick, mark=square, mark size=2, every mark/.append style={solid}]
        table{
        1.0 4.5808e-01
        2.0 2.5998e-02
        3.0 2.9107e-06
        };
        \addlegendentry{}

        \addplot[smooth, dashdotdotted, color=matplottikz-color1, thick, mark=square, mark size=2, every mark/.append style={solid}]
        table{
        1.0 4.8270e-01
        2.0 4.7367e-01
        3.0 4.2479e-01
        4.0 1.2611e-02
        5.0 1.9786e-05
        6.0 4.7082e-08 
        };
        \addlegendentry{SC}

        \addplot[smooth, solid, color=matplottikz-color2, thick, mark=triangle, mark size=2, every mark/.append style={solid}]
        table{
        1.0 1.1866e-01
        2.0 8.6734e-04
        3.0 6.4101e-08 
        };
        \addlegendentry{}
        
        \addplot[smooth, dashed, color=matplottikz-color2, thick, mark=triangle, mark size=2, every mark/.append style={solid}]
        table{
        1.0 4.5808e-01
        2.0 2.5998e-02
        3.0 2.9107e-06
        };
        \addlegendentry{}

        \addplot[smooth, dashdotdotted, color=matplottikz-color2, thick, mark=triangle, mark size=2, every mark/.append style={solid}]
        table{
        1.0 4.8259e-01
        2.0 4.7354e-01
        3.0 4.2272e-01 
        4.0 1.1825e-02
        5.0 1.8226e-05
        6.0 3.3531e-08
        };
        \addlegendentry{Spec-SC}

    \end{axis}
\end{tikzpicture}
    }
    \subfigure[$N=4096$]{
        \label{fig:fer_4k}
        \begin{tikzpicture}
    \pgfplotsset{
        label style = {font=\fontsize{9pt}{7.2}\selectfont},
        tick label style = {font=\fontsize{7pt}{7.2}\selectfont}
    }
    \begin{axis}[
        scale = 1,
        ymode=log,
        minor x tick num=9,
        xmin=0.85,
        xmax=6.1,
        xlabel={$E_\mathrm{b}/N_0$ (dB)}, 
        ylabel={FER}, 
        ytick={1e-0, 1e-1, 1e-2, 1e-3, 1e-4, 1e-5, 1e-6, 1e-7, 1e-8},
        grid=both,
        ymajorgrids=true,
        xmajorgrids=true,
        grid style=dashed,
        width=0.9\columnwidth,
        height=0.75\columnwidth,
        legend style={
            anchor=center,
            cells={anchor=west},
            column sep= 2mm,
            thick,
            font=\fontsize{7pt}{7.2}\selectfont,
        },
        legend to name=legend:polar_fer_4k,
        legend columns=3,
        legend pos=north east,
    ]

        \addlegendimage{legend image with text=$1/4$}
        \addlegendentry{}
        \addlegendimage{legend image with text=$1/2$}
        \addlegendentry{}
        \addlegendimage{legend image with text=$7/8$}
        \addlegendentry{}
        
        \addplot[smooth, solid, color=matplottikz-color1, thick, mark=square, mark size=2, every mark/.append style={solid}]
        table{
        1.0 3.4192e-01 
        2.0 2.6855e-03 
        3.0 2.9902e-07 
        };
        \addlegendentry{}
        
        \addplot[smooth, dashed, color=matplottikz-color1, thick, mark=square, mark size=2, every mark/.append style={solid}]
        table{
        1.0 9.8215e-01
        2.0 7.9926e-02 
        3.0 1.5259e-05 
        };
        \addlegendentry{}

        \addplot[smooth, dashdotdotted, color=matplottikz-color1, thick, mark=square, mark size=2, every mark/.append style={solid}]
        table{
        1.0 1.0000e+00
        2.0 1.0000e+00
        3.0 9.9255e-01
        4.0 7.8186e-02
        5.0 2.8774e-04
        6.0 1.0821e-06
        };
        \addlegendentry{SC}

        \addplot[smooth, solid, color=matplottikz-color2, thick, mark=triangle, mark size=2, every mark/.append style={solid}]
        table{
        1.0 3.3987e-01
        2.0 3.1433e-03
        3.0 3.2963e-07 
        };
        \addlegendentry{}
        
        \addplot[smooth, dashed, color=matplottikz-color2, thick, mark=triangle, mark size=2, every mark/.append style={solid}]
        table{
        1.0 9.8215e-01
        2.0 7.9926e-02
        3.0 1.5259e-05
        };
        \addlegendentry{}

        \addplot[smooth, dashdotdotted, color=matplottikz-color2, thick, mark=triangle, mark size=2, every mark/.append style={solid}]
        table{
        1.0 1.0000e+00
        2.0 1.0000e+00
        3.0 9.9084e-01 
        4.0 7.4585e-02
        5.0 3.6905e-04 
        6.0 9.0054e-07
        };
        \addlegendentry{Spec-SC}

    \end{axis}
\end{tikzpicture}
    }
    \caption{BER and FER performance comparison between the successive cancellation (SC) decoder and the proposed Speculative-SC (Spec-SC) decoder for polar codes of length $N=4096$ with code rates $R=\nicefrac{1}{4}$, $R=\nicefrac{1}{2}$, and $R=\nicefrac{7}{8}$, corresponding to $\mathcal{P}(4096,1024)$, $\mathcal{P}(4096,2048)$, and $\mathcal{P}(4096,3584)$, respectively.}
    \label{fig:4k_ber_fer}
\end{figure*}

\subsection{Simulation Setup}
\label{subsec:sim_setup}
All experiments are conducted using the Sionna framework~\cite{sionna} to simulate a BPSK transmission over an additive white Gaussian noise (AWGN) channel. %

The polar codes used in the experiments are constructed based on the Gaussian approximation method~\cite{Trifonov2012Efficient} at an SNR of 3\,dB. Codes of length $N=4096$ are evaluated for code rates $R \in \{\nicefrac{1}{4}, \nicefrac{1}{2}, \nicefrac{7}{8}\}$. Simulations are terminated once the frame error rate (FER) performance reaches below a threshold of $10^{-4}$ with at least 256 observed errors. 

\subsection{Error Correction Performance}
\label{subsec:ber_fer_performance}
\Cref{fig:4k_ber_fer} compares the BER and FER performance of the SC and Spec-SC decoders. As shown, the Spec-SC decoder achieves identical performance to the conventional SC decoder across all SNR values. This demonstrates that the speculative mechanism introduced in Spec-SC does not compromise decoding accuracy.

\begin{figure}
    \centering
    \hspace{25pt}\ref{legend:polar_nodes_traversed}\\\vspace{5pt}
    \begin{tikzpicture}
    \pgfplotsset{
        label style = {font=\fontsize{9pt}{7.2}\selectfont},
        tick label style = {font=\fontsize{7pt}{7.2}\selectfont}
    }
    \begin{axis}[
        width=0.9\columnwidth,
        height=0.75\columnwidth,
        xlabel={$E_\mathrm{b}/N_0$ (dB)}, 
        ylabel={Average Nodes Traversed}, 
        grid=both,
        grid style=dashed,
        xmajorgrids=true,
        ymajorgrids=true,
        minor x tick num=1,
        legend style={
            font=\fontsize{7pt}{7.2}\selectfont,
            cells={anchor=west},
            column sep=2mm,
            legend columns=3,
            at={(0.5,1.15)},
            anchor=south
        },
        legend to name=legend:polar_nodes_traversed,
        legend columns=3,
        legend pos=north east,
    ]

        \addlegendimage{legend image with text={$\mathcal{P}(4096, 1024)$}}
        \addlegendentry{}
        \addlegendimage{legend image with text={$\mathcal{P}(4096, 2048)$}}
        \addlegendentry{}
        \addlegendimage{legend image with text={$\mathcal{P}(4096, 3584)$}}
        \addlegendentry{}

        \addplot[smooth, thick, solid, color=matplottikz-color1, mark size=2, every mark/.append style={solid}]
        table{
        1.0 383.0
        1.5 383.0
        2.0 383.0
        2.5 383.0
        3.0 383.0
        3.5 383.0
        4.0 383.0
        4.5 383.0
        5.0 383.0
        5.5 383.0
        6.0 383.0
        };
        \addlegendentry{}

        \addplot[smooth, thick, dashed, color=matplottikz-color2, mark size=2, every mark/.append style={solid}]
        table{
        1.0 467.0
        1.5 467.0
        2.0 467.0
        2.5 467.0
        3.0 467.0
        3.5 467.0
        4.0 467.0
        4.5 467.0
        5.0 467.0
        5.5 467.0
        6.0 467.0
        };
        \addlegendentry{}

        \addplot[smooth, thick, dashdotdotted, color=matplottikz-color3, mark size=2, every mark/.append style={solid}]
        table{
        1.0 197.0
        1.5 197.0
        2.0 197.0
        2.5 197.0
        3.0 197.0
        3.5 197.0
        4.0 197.0
        4.5 197.0
        5.0 197.0
        5.5 197.0
        6.0 197.0
        };
        \addlegendentry{SC}

        \addplot[smooth, thick, solid, color=matplottikz-color1, mark=square, mark size=2, every mark/.append style={solid}]
        table{
        1.0 231.5
        1.5 170.2
        2.0 137.2
        2.5 115.8
        3.0 98.1
        3.5 83.8
        4.0 71.3
        4.5 61.3
        5.0 52.6
        5.5 45.4
        6.0 39.4    
        };
        \addlegendentry{}
        
        \addplot[smooth, thick, dashed, color=matplottikz-color2, mark=square, mark size=2, every mark/.append style={solid}]
        table{
        1.0 419.5
        1.5 321.0
        2.0 197.8
        2.5 143.5
        3.0 112.7
        3.5 89.9
        4.0 72.8
        4.5 59.0
        5.0 47.5
        5.5 37.8
        6.0 30.4
        };
        \addlegendentry{}

        \addplot[smooth, thick, dashdotdotted, color=matplottikz-color3, mark=square, mark size=2, every mark/.append style={solid}]
        table{
        1.0 183.7
        1.5 183.4
        2.0 183.4
        2.5 183.2
        3.0 179.8
        3.5 147.2
        4.0 99.9
        4.5 66.9
        5.0 46.7
        5.5 34.4
        6.0 21.5
        };
        \addlegendentry{Spec-SC}
        
    \end{axis}
\end{tikzpicture}
    \caption{Average number of nodes traversed in the main decoding tree. The count excludes nodes from successfully speculated branches, as these branches are skipped entirely in the main decoding tree when the speculation is validated.}
    \label{fig:nodes_traversed}
\end{figure}

\subsection{Total Number of Nodes Traversed}
\label{subsec:nodes_traversed}
\Cref{fig:nodes_traversed} illustrates the average number of sequential nodes traversed in the main decoding tree for both SC and Spec-SC decoders. The results are obtained by averaging over 1000 frames per simulation point. As the SNR increases, the number of nodes traversed in the decoding tree decreases, indicating that fewer re-decodings are required at higher channel reliability.

In practical systems with automatic repeat request capabilities, such data channels following 3GPP standards, an FER target of $10^{-2}$ is often selected. Focusing on this region, the Spec-SC decoder demonstrates a substantial reduction in the number of nodes traversed compared to the conventional SC decoder:

For $\mathcal{P}(4096,1024)$, the target FER of $10^{-2}$ is reached at 2\,dB, where the average number of traversed nodes is reduced from 383 to 137.2, corresponding to a $64\,\%$ reduction.

For $\mathcal{P}(4096,2048)$, the same target FER is achieved at 2.5\,dB, with the average number of traversed nodes reduced from 467 to 143.5, corresponding to a $69\,\%$ reduction.

Finally, for $\mathcal{P}(4096,3584)$, the FER target is met at 4.5\,dB, where the average number of traversed nodes decreases from 467 to 143.5 from  197 to 66.9, corresponding to a $66\,\%$ reduction.

While Spec-SC introduces additional speculative computations, it shortens the main decoding path, enabling large subtrees to be skipped when speculation succeeds. As a result, even with extra speculative effort, Spec-SC can improve throughput on architectures that support parallel execution or dedicated speculative units. The actual hardware performance, however, depends on implementation details such as pipeline depth and parallelism. 

\begin{figure*}[t]
    \centering
    \subfigure[$E_\mathrm{b}/N_0 = 1.0$\,dB]{
        \label{fig:success_prob_1.0_dB}
        \begin{tikzpicture}
    \pgfplotsset{
        label style = {font=\fontsize{9pt}{7.2}\selectfont},
        tick label style = {font=\fontsize{7pt}{7.2}\selectfont},
    }

    \begin{axis}[
        scale = 0.8, 
        width=0.9\columnwidth,
        height=0.75\columnwidth,
        xlabel={Node ID},
        ylabel={Success Rate},
        grid=both,
        grid style=dashed,
        xmajorgrids=true,
        ymajorgrids=true,
        minor x tick num=1,
        ymin=0,
        ymax=1.0,
        xmin=0,
        xmax=820,
        scaled x ticks=false,
    ]

    \addplot+[
        only marks,
        mark=*,
        mark size=1.8,
        opacity=0.85,
        color=matplottikz-color3,
        mark options={solid, fill=matplottikz-color3}
    ]
    table[row sep=\\] {
           0 0.00 \\
           1 0.00 \\
           2 0.00 \\
           3 0.00 \\
           4 0.00 \\
           5 0.00 \\
           7 0.00 \\
           8 0.00 \\
           9 0.00 \\
          10 0.45 \\
          11 0.00 \\
          15 0.00 \\
          16 0.00 \\
          17 0.00 \\
          18 0.12 \\
          19 0.00 \\
          20 0.34 \\
          21 0.45 \\
          23 0.00 \\
          24 0.45 \\
          31 0.00 \\
          32 0.00 \\
          33 0.00 \\
          34 0.02 \\
          35 0.00 \\
          36 0.07 \\
          37 0.12 \\
          39 0.00 \\
          40 0.25 \\
          41 0.32 \\
          43 0.47 \\
          47 0.00 \\
          48 0.28 \\
          49 0.45 \\
          64 0.00 \\
          65 0.00 \\
          66 0.00 \\
          67 0.00 \\
          68 0.00 \\
          69 0.01 \\
          71 0.00 \\
          72 0.02 \\
          73 0.10 \\
          75 0.11 \\
          79 0.00 \\
          80 0.17 \\
          81 0.21 \\
          83 0.33 \\
          87 0.46 \\
          95 0.00 \\
          96 0.29 \\
          97 0.29 \\
          99 0.45 \\
         130 0.00 \\
         132 0.00 \\
         133 0.00 \\
         134 0.15 \\
         136 0.00 \\
         137 0.00 \\
         138 0.30 \\
         139 0.01 \\
         144 0.01 \\
         145 0.03 \\
         147 0.09 \\
         151 0.14 \\
         159 0.00 \\
         160 0.09 \\
         161 0.18 \\
         163 0.23 \\
         167 0.33 \\
         175 0.46 \\
         191 0.00 \\
         192 0.15 \\
         193 0.24 \\
         195 0.24 \\
         199 0.45 \\
         262 0.03 \\
         266 0.03 \\
         268 0.07 \\
         269 0.17 \\
         274 0.14 \\
         276 0.22 \\
         277 0.29 \\
         279 0.03 \\
         280 0.29 \\
         289 0.04 \\
         290 0.28 \\
         291 0.04 \\
         292 0.47 \\
         295 0.09 \\
         303 0.18 \\
         320 0.05 \\
         321 0.10 \\
         323 0.22 \\
         327 0.29 \\
         335 0.29 \\
         351 0.44 \\
         384 0.07 \\
         385 0.17 \\
         387 0.32 \\
         391 0.30 \\
         399 0.48 \\
         526 0.24 \\
         534 0.33 \\
         538 0.43 \\
         553 0.26 \\
         555 0.35 \\
         560 0.33 \\
         561 0.35 \\
         580 0.31 \\
         581 0.29 \\
         584 0.31 \\
         585 0.42 \\
         642 0.32 \\
         647 0.26 \\
         655 0.35 \\
         671 0.33 \\
         703 0.44 \\
         770 0.43 \\
         775 0.36 \\
         783 0.33 \\
         799 0.44 \\
    };

    \end{axis}
\end{tikzpicture}
    }
    \subfigure[$E_\mathrm{b}/N_0 = 2.0$\,dB]{
        \label{fig:success_prob_2.0_dB}
        \begin{tikzpicture}
    \pgfplotsset{
        label style = {font=\fontsize{9pt}{7.2}\selectfont},
        tick label style = {font=\fontsize{7pt}{7.2}\selectfont},
    }

    \begin{axis}[
        scale = 0.8, 
        width=0.9\columnwidth,
        height=0.75\columnwidth,
        xlabel={Node ID},
        ylabel={Success Rate},
        grid=both,
        grid style=dashed,
        xmajorgrids=true,
        ymajorgrids=true,
        minor x tick num=1,
        ymin=0,
        ymax=1.0,
        xmin=0,
        xmax=820,
        scaled x ticks=false,
    ]

    \addplot+[
        only marks,
        mark=*,
        mark size=1.8,
        opacity=0.85,
        color=matplottikz-color3,
        mark options={solid, fill=matplottikz-color3}
    ]
    table[row sep=\\] {
           0 0.00 \\
           1 0.00 \\
           2 0.00 \\
           3 0.00 \\
           4 0.00 \\
           5 0.00 \\
           7 0.00 \\
           8 0.00 \\
           9 0.00 \\
          10 0.43 \\
          11 0.00 \\
          15 0.00 \\
          16 0.00 \\
          17 0.00 \\
          18 0.14 \\
          19 0.00 \\
          20 0.36 \\
          21 0.49 \\
          23 0.00 \\
          24 0.49 \\
          31 0.00 \\
          32 0.00 \\
          33 0.00 \\
          34 0.01 \\
          35 0.00 \\
          36 0.07 \\
          37 0.11 \\
          39 0.00 \\
          40 0.24 \\
          41 0.36 \\
          43 0.46 \\
          47 0.00 \\
          48 0.31 \\
          49 0.45 \\
          64 0.00 \\
          65 0.00 \\
          66 0.00 \\
          67 0.00 \\
          68 0.00 \\
          69 0.01 \\
          71 0.00 \\
          72 0.02 \\
          73 0.10 \\
          75 0.12 \\
          79 0.00 \\
          80 0.17 \\
          81 0.21 \\
          83 0.34 \\
          87 0.47 \\
          95 0.00 \\
          96 0.27 \\
          97 0.32 \\
          99 0.47 \\
         130 0.00 \\
         132 0.00 \\
         133 0.00 \\
         134 0.14 \\
         136 0.00 \\
         137 0.01 \\
         138 0.34 \\
         139 0.01 \\
         144 0.01 \\
         145 0.04 \\
         147 0.09 \\
         151 0.15 \\
         159 0.00 \\
         160 0.11 \\
         161 0.17 \\
         163 0.26 \\
         167 0.34 \\
         175 0.46 \\
         191 0.00 \\
         192 0.15 \\
         193 0.24 \\
         195 0.22 \\
         199 0.47 \\
         262 0.03 \\
         266 0.06 \\
         268 0.07 \\
         269 0.18 \\
         274 0.13 \\
         276 0.23 \\
         277 0.28 \\
         279 0.04 \\
         280 0.26 \\
         289 0.03 \\
         290 0.28 \\
         291 0.06 \\
         292 0.40 \\
         295 0.11 \\
         303 0.15 \\
         320 0.05 \\
         321 0.10 \\
         323 0.22 \\
         327 0.27 \\
         335 0.31 \\
         351 0.49 \\
         384 0.07 \\
         385 0.17 \\
         387 0.27 \\
         391 0.28 \\
         399 0.46 \\
         526 0.26 \\
         534 0.31 \\
         538 0.43 \\
         553 0.24 \\
         555 0.35 \\
         560 0.30 \\
         561 0.33 \\
         580 0.32 \\
         581 0.34 \\
         584 0.34 \\
         585 0.43 \\
         642 0.31 \\
         647 0.25 \\
         655 0.34 \\
         671 0.36 \\
         703 0.46 \\
         770 0.42 \\
         775 0.34 \\
         783 0.35 \\
         799 0.47 \\
    };

    \end{axis}
\end{tikzpicture}
    }
    \subfigure[$E_\mathrm{b}/N_0 = 3.0$\,dB]{
        \label{fig:success_prob_3.0_dB}
        \begin{tikzpicture}
    \pgfplotsset{
        label style = {font=\fontsize{9pt}{7.2}\selectfont},
        tick label style = {font=\fontsize{7pt}{7.2}\selectfont},
    }

    \begin{axis}[
        scale = 0.8, 
        width=0.9\columnwidth,
        height=0.75\columnwidth,
        xlabel={Node ID},
        ylabel={Success Rate},
        grid=both,
        grid style=dashed,
        xmajorgrids=true,
        ymajorgrids=true,
        minor x tick num=1,
        ymin=0,
        ymax=1.0,
        xmin=0,
        xmax=820,
        scaled x ticks=false,
    ]

    \addplot+[
        only marks,
        mark=*,
        mark size=1.8,
        opacity=0.85,
        color=matplottikz-color3,
        mark options={solid, fill=matplottikz-color3}
    ]
    table[row sep=\\] {
   0 0.00 \\
   1 0.00 \\
   2 0.00 \\
   3 0.00 \\
   4 0.00 \\
   5 0.00 \\
   7 0.00 \\
   8 0.00 \\
   9 0.00 \\
  10 0.44 \\
  11 0.00 \\
  15 0.00 \\
  16 0.00 \\
  17 0.00 \\
  18 0.15 \\
  19 0.00 \\
  20 0.35 \\
  21 0.45 \\
  23 0.00 \\
  24 0.46 \\
  31 0.00 \\
  32 0.00 \\
  33 0.00 \\
  34 0.03 \\
  35 0.00 \\
  36 0.10 \\
  37 0.15 \\
  39 0.00 \\
  40 0.25 \\
  41 0.34 \\
  43 0.46 \\
  47 0.00 \\
  48 0.28 \\
  49 0.47 \\
  64 0.00 \\
  65 0.00 \\
  66 0.02 \\
  67 0.00 \\
  68 0.03 \\
  69 0.05 \\
  71 0.00 \\
  72 0.04 \\
  73 0.11 \\
  75 0.18 \\
  79 0.00 \\
  80 0.18 \\
  81 0.21 \\
  83 0.34 \\
  87 0.48 \\
  95 0.00 \\
  96 0.27 \\
  97 0.29 \\
  99 0.45 \\
 130 0.00 \\
 132 0.01 \\
 133 0.03 \\
 134 0.30 \\
 136 0.03 \\
 137 0.04 \\
 138 0.38 \\
 139 0.06 \\
 144 0.04 \\
 145 0.05 \\
 147 0.12 \\
 151 0.18 \\
 159 0.00 \\
 160 0.11 \\
 161 0.20 \\
 163 0.25 \\
 167 0.36 \\
 175 0.46 \\
 191 0.01 \\
 192 0.16 \\
 193 0.24 \\
 195 0.26 \\
 199 0.46 \\
 262 0.12 \\
 266 0.22 \\
 268 0.25 \\
 269 0.31 \\
 274 0.24 \\
 276 0.31 \\
 277 0.37 \\
 279 0.08 \\
 280 0.34 \\
 289 0.06 \\
 290 0.34 \\
 291 0.08 \\
 292 0.45 \\
 295 0.14 \\
 303 0.21 \\
 320 0.08 \\
 321 0.11 \\
 323 0.25 \\
 327 0.30 \\
 335 0.29 \\
 351 0.45 \\
 384 0.09 \\
 385 0.17 \\
 387 0.28 \\
 391 0.30 \\
 399 0.46 \\
 526 0.40 \\
 534 0.47 \\
 538 0.49 \\
 553 0.33 \\
 555 0.43 \\
 560 0.37 \\
 561 0.40 \\
 580 0.34 \\
 581 0.38 \\
 584 0.33 \\
 585 0.44 \\
 642 0.35 \\
 647 0.24 \\
 655 0.35 \\
 671 0.34 \\
 703 0.48 \\
 770 0.45 \\
 775 0.34 \\
 783 0.35 \\
 799 0.44 \\
    };

    \end{axis}
\end{tikzpicture}
    }
    \subfigure[$E_\mathrm{b}/N_0 = 4.0$\,dB]{
        \label{fig:success_prob_1.0_dB}
        \begin{tikzpicture}
    \pgfplotsset{
        label style = {font=\fontsize{9pt}{7.2}\selectfont},
        tick label style = {font=\fontsize{7pt}{7.2}\selectfont},
    }

    \begin{axis}[
        scale = 0.8, 
        width=0.9\columnwidth,
        height=0.75\columnwidth,
        xlabel={Node ID},
        ylabel={Success Rate},
        grid=both,
        grid style=dashed,
        xmajorgrids=true,
        ymajorgrids=true,
        minor x tick num=1,
        ymin=0,
        ymax=1.0,
        xmin=0,
        xmax=820,
        scaled x ticks=false,
    ]

    \addplot+[
        only marks,
        mark=*,
        mark size=1.8,
        opacity=0.85,
        color=matplottikz-color3,
        mark options={solid, fill=matplottikz-color3}
    ]
    table[row sep=\\] {
   0 0.00 \\
   1 0.00 \\
   2 0.28 \\
   3 0.00 \\
   4 0.17 \\
   5 0.30 \\
   7 0.00 \\
   8 0.14 \\
   9 0.27 \\
  10 0.92 \\
  11 0.45 \\
  15 0.00 \\
  16 0.08 \\
  17 0.20 \\
  18 0.76 \\
  19 0.34 \\
  20 0.86 \\
  21 0.92 \\
  23 0.64 \\
  24 0.90 \\
  31 0.00 \\
  32 0.02 \\
  33 0.10 \\
  34 0.70 \\
  35 0.23 \\
  36 0.81 \\
  37 0.89 \\
  39 0.34 \\
  40 0.79 \\
  41 0.88 \\
  43 0.93 \\
  47 0.60 \\
  48 0.78 \\
  49 0.91 \\
  64 0.00 \\
  65 0.01 \\
  66 0.56 \\
  67 0.04 \\
  68 0.72 \\
  69 0.67 \\
  71 0.16 \\
  72 0.70 \\
  73 0.81 \\
  75 0.89 \\
  79 0.38 \\
  80 0.85 \\
  81 0.90 \\
  83 0.86 \\
  87 0.93 \\
  95 0.62 \\
  96 0.70 \\
  97 0.76 \\
  99 0.92 \\
 130 0.17 \\
 132 0.31 \\
 133 0.50 \\
 134 0.82 \\
 136 0.56 \\
 137 0.71 \\
 138 0.69 \\
 139 0.83 \\
 144 0.75 \\
 145 0.79 \\
 147 0.81 \\
 151 0.89 \\
 159 0.34 \\
 160 0.73 \\
 161 0.84 \\
 163 0.91 \\
 167 0.86 \\
 175 0.93 \\
 191 0.58 \\
 192 0.78 \\
 193 0.87 \\
 195 0.87 \\
 199 0.90 \\
 262 0.67 \\
 266 0.81 \\
 268 0.84 \\
 269 0.82 \\
 274 0.74 \\
 276 0.83 \\
 277 0.93 \\
 279 0.77 \\
 280 0.89 \\
 289 0.68 \\
 290 0.91 \\
 291 0.79 \\
 292 0.71 \\
 295 0.83 \\
 303 0.89 \\
 320 0.75 \\
 321 0.79 \\
 323 0.84 \\
 327 0.90 \\
 335 0.93 \\
 351 0.92 \\
 384 0.83 \\
 385 0.78 \\
 387 0.89 \\
 391 0.87 \\
 399 0.92 \\
 526 0.82 \\
 534 0.87 \\
 538 0.57 \\
 553 0.87 \\
 555 0.94 \\
 560 0.79 \\
 561 0.88 \\
 580 0.82 \\
 581 0.89 \\
 584 0.85 \\
 585 0.77 \\
 642 0.89 \\
 647 0.84 \\
 655 0.89 \\
 671 0.93 \\
 703 0.92 \\
 770 0.58 \\
 775 0.90 \\
 783 0.91 \\
 799 0.91 \\
    };

    \end{axis}
\end{tikzpicture}
    }
    \subfigure[$E_\mathrm{b}/N_0 = 5.0$\,dB]{
        \label{fig:success_prob_2.0_dB}
        \begin{tikzpicture}
    \pgfplotsset{
        label style = {font=\fontsize{9pt}{7.2}\selectfont},
        tick label style = {font=\fontsize{7pt}{7.2}\selectfont},
    }

    \begin{axis}[
        scale = 0.8, 
        width=0.9\columnwidth,
        height=0.75\columnwidth,
        xlabel={Node ID},
        ylabel={Success Rate},
        grid=both,
        grid style=dashed,
        xmajorgrids=true,
        ymajorgrids=true,
        minor x tick num=1,
        ymin=0,
        ymax=1.0,
        xmin=0,
        xmax=820,
        scaled x ticks=false,
    ]

    \addplot+[
        only marks,
        mark=*,
        mark size=1.8,
        opacity=0.85,
        color=matplottikz-color3,
        mark options={solid, fill=matplottikz-color3}
    ]
    table[row sep=\\] {
   0 0.01 \\
   1 0.01 \\
   2 0.80 \\
   3 0.04 \\
   4 0.74 \\
   5 0.80 \\
   7 0.01 \\
   8 0.81 \\
   9 0.85 \\
  10 0.98 \\
  11 0.91 \\
  15 0.00 \\
  16 0.76 \\
  17 0.84 \\
  18 0.82 \\
  19 0.89 \\
  20 0.96 \\
  21 0.98 \\
  23 0.96 \\
  24 1.00 \\
  31 0.00 \\
  32 0.45 \\
  33 0.69 \\
  34 0.94 \\
  35 0.85 \\
  36 0.97 \\
  37 0.99 \\
  39 0.91 \\
  40 0.88 \\
  41 0.98 \\
  43 0.98 \\
  47 0.96 \\
  48 0.90 \\
  49 1.00 \\
  64 0.14 \\
  65 0.31 \\
  66 0.95 \\
  67 0.54 \\
  68 0.94 \\
  69 0.93 \\
  71 0.77 \\
  72 0.94 \\
  73 0.98 \\
  75 1.00 \\
  79 0.88 \\
  80 0.99 \\
  81 1.00 \\
  83 0.98 \\
  87 0.98 \\
  95 0.96 \\
  96 0.70 \\
  97 0.88 \\
  99 1.00 \\
 130 0.79 \\
 132 0.87 \\
 133 0.92 \\
 134 1.00 \\
 136 0.94 \\
 137 0.98 \\
 138 0.76 \\
 139 0.98 \\
 144 0.98 \\
 145 0.98 \\
 147 0.98 \\
 151 1.00 \\
 159 0.83 \\
 160 0.99 \\
 161 0.94 \\
 163 0.99 \\
 167 0.97 \\
 175 0.98 \\
 191 0.94 \\
 192 0.97 \\
 193 1.00 \\
 195 1.00 \\
 199 1.00 \\
 262 0.96 \\
 266 0.99 \\
 268 0.98 \\
 269 1.00 \\
 274 1.00 \\
 276 1.00 \\
 277 1.00 \\
 279 0.97 \\
 280 1.00 \\
 289 0.94 \\
 290 1.00 \\
 291 0.96 \\
 292 0.33 \\
 295 0.99 \\
 303 0.98 \\
 320 0.98 \\
 321 0.97 \\
 323 0.98 \\
 327 1.00 \\
 335 1.00 \\
 351 0.97 \\
 384 1.00 \\
 385 0.92 \\
 387 0.95 \\
 391 1.00 \\
 399 1.00 \\
 526 1.00 \\
 534 1.00 \\
 538 1.00 \\
 553 1.00 \\
 555 1.00 \\
 560 1.00 \\
 561 1.00 \\
 580 1.00 \\
 581 1.00 \\
 584 1.00 \\
 585 1.00 \\
 642 1.00 \\
 647 0.94 \\
 655 0.99 \\
 671 1.00 \\
 703 0.99 \\
 775 0.97 \\
 783 1.00 \\
 799 1.00 \\
    };

    \end{axis}
\end{tikzpicture}
    }
    \subfigure[$E_\mathrm{b}/N_0 = 6.0$\,dB]{
        \label{fig:success_prob_3.0_dB}
        \begin{tikzpicture}
    \pgfplotsset{
        label style = {font=\fontsize{9pt}{7.2}\selectfont},
        tick label style = {font=\fontsize{7pt}{7.2}\selectfont},
    }

    \begin{axis}[
        scale = 0.8, 
        width=0.9\columnwidth,
        height=0.75\columnwidth,
        xlabel={Node ID},
        ylabel={Success Rate},
        grid=both,
        grid style=dashed,
        xmajorgrids=true,
        ymajorgrids=true,
        minor x tick num=1,
        ymin=0,
        ymax=1.0,
        xmin=0,
        xmax=820,
        scaled x ticks=false,
    ]

    \addplot+[
        only marks,
        mark=*,
        mark size=1.8,
        opacity=0.85,
        color=matplottikz-color3,
        mark options={solid, fill=matplottikz-color3}
    ]
    table[row sep=\\] {
   0 0.55 \\
   1 0.57 \\
   2 0.96 \\
   3 0.52 \\
   4 0.97 \\
   5 0.96 \\
   7 0.30 \\
   8 0.98 \\
   9 0.99 \\
  10 1.00 \\
  11 0.98 \\
  15 0.15 \\
  16 0.97 \\
  17 0.99 \\
  18 0.88 \\
  19 0.99 \\
  20 0.83 \\
  21 1.00 \\
  23 1.00 \\
  24 1.00 \\
  31 0.05 \\
  32 0.93 \\
  33 0.96 \\
  34 1.00 \\
  35 0.98 \\
  36 1.00 \\
  37 1.00 \\
  39 1.00 \\
  40 0.50 \\
  41 1.00 \\
  43 1.00 \\
  47 1.00 \\
  48 0.50 \\
  49 1.00 \\
  64 0.73 \\
  65 0.85 \\
  66 1.00 \\
  67 0.93 \\
  68 1.00 \\
  69 1.00 \\
  71 0.96 \\
  72 1.00 \\
  73 1.00 \\
  75 1.00 \\
  79 0.99 \\
  80 1.00 \\
  81 1.00 \\
  83 0.83 \\
  87 1.00 \\
  95 0.99 \\
  96 1.00 \\
  97 1.00 \\
  99 1.00 \\
 130 0.98 \\
 132 0.98 \\
 133 1.00 \\
 136 0.98 \\
 137 1.00 \\
 139 1.00 \\
 144 1.00 \\
 145 1.00 \\
 147 1.00 \\
 151 1.00 \\
 159 0.98 \\
 160 1.00 \\
 161 1.00 \\
 163 1.00 \\
 167 1.00 \\
 175 1.00 \\
 191 1.00 \\
 192 1.00 \\
 193 1.00 \\
 195 1.00 \\
 199 1.00 \\
 262 1.00 \\
 266 1.00 \\
 274 1.00 \\
 279 1.00 \\
 289 0.95 \\
 291 1.00 \\
 295 1.00 \\
 303 1.00 \\
 320 1.00 \\
 321 1.00 \\
 323 1.00 \\
 327 1.00 \\
 335 1.00 \\
 351 1.00 \\
 384 1.00 \\
 385 1.00 \\
 387 1.00 \\
 391 1.00 \\
 399 1.00 \\
 580 1.00 \\
 647 1.00 \\
 655 1.00 \\
 671 1.00 \\
 703 1.00 \\
 775 1.00 \\
 783 1.00 \\
 799 1.00 \\
    };

    \end{axis}
\end{tikzpicture}
    }
    
    \caption{Speculation success rate versus node ID for $\mathcal{P}(4096, 3584)$ at different $E_\mathrm{b}/N_0$ values. Node IDs follow a breadth-first labeling of the SC decoding tree, with node 0 denoting the root and larger node IDs corresponding to lower-level subtrees closer to the leaves. Each point represents the fraction of simulated frames for which the speculative decision at that node passes the validation test. At lower $E_\mathrm{b}/N_0$, successful speculation is limited to fewer nodes because the tentative right-branch decisions are less reliable. As $E_\mathrm{b}/N_0$ increases, more nodes exhibit high success rates, indicating that speculation becomes more reliable under improved channel conditions. 
    }
    \label{fig:spec_success_rate}
\end{figure*}

\begin{figure*}
    \centering
    \begin{tikzpicture}[
    every node/.style={
        rounded corners,
        inner sep=1.2pt,
        font=\fontsize{0.6em}{0.72em}\selectfont,
        align=center
    },
    edge from parent/.style={draw},
    edge from parent path={
        (\tikzparentnode.south) -- (\tikzchildnode.north)
    },
    level distance=11mm,
    level 1/.style={sibling distance=78mm},
    level 2/.style={sibling distance=40mm},
    level 3/.style={sibling distance=20mm},
    level 4/.style={sibling distance=10mm},
    snr/.style={
        midway,
        sloped,
        above=1pt,
        draw=none,
        fill=white,
        inner sep=0.5pt,
        font=\tiny
    }
]

\node {(4096,3584)}
    child {
        node {(2048,1586)}
        child {
            node {(1024,622)}
            child {
                node {(512,195)}
                child {
                    node {(256,39)}
                }
                child {
                    node {(256,156)}
                    edge from parent node[snr] {${>}5$ dB}
                }
            }
            child {
                node {(512,427)}
                child {
                    node {(256,181)}
                }
                child {
                    node {(256,246)}
                    edge from parent node[snr] {$4.52$ dB}
                }
                edge from parent node[snr] {${>}5$ dB}
            }
        }
        child {
            node {(1024,964)}
            child {
                node {(512,455)}
                child {
                    node {(256,204)}
                }
                child {
                    node {(256,251)}
                    edge from parent node[snr] {$4.39$ dB}
                }
            }
            child {
                node {(512,509)}
                edge from parent node[snr] {$4.57$ dB}
            }
            edge from parent node[snr] {${>}5$ dB}
        }
    }
    child {
        node {(2048,1998)}
        child {
            node {(1024,975)}
            child {
                node {(512,464)}
                child {
                    node {(256,211)}
                }
                child {
                    node {(256,253)}
                    edge from parent node[snr] {$4.12$ dB}
                }
            }
            child {
                node {(512,511)}
                edge from parent node[snr] {$4.40$ dB}
            }
        }
        child {
            node {(1024,1023)}
            edge from parent node[snr] {$4.39$ dB}
        }
        edge from parent node[snr] {${>}5$ dB}
    };

\end{tikzpicture}
    \caption{Speculation-threshold tree for $\mathcal{P}(4096,3584)$. Each node is labeled by its corresponding subtree size and number of information bits, $(N_v,K_v)$. An edge label gives the minimum $E_\mathrm{b}/N_0$ at which speculation along that branch succeeds in more than $50\,\%$ of simulated frames.}
    \label{fig:spec_threshold_tree}
\end{figure*}

\subsection{Ablation Study on Speculation Success Rate}

To understand where speculation works best in the SC decoding tree, we measure the speculation success rate at each node for different $E_\mathrm{b}/N_0$ values for $\mathcal{P}(4096, 3584)$, whereby a successful speculation is defined as passing the validation condition. However, it should be noted that some failed validations may still have produced the same final SC output.

The nodes are labeled in breadth-first order. The root node is labeled $0$. Its two children are labeled $1$ and $2$, and the next level is labeled $3,4,5,6$, from left to right. With this labeling, smaller node IDs represent larger subtrees closer to the root, while larger node IDs represent smaller subtrees closer to the leaves.

Fig.~\ref{fig:spec_success_rate} shows the speculation success rate for each node at different $E_\mathrm{b}/N_0$ values. At low SNR, only a small number of nodes can be speculated successfully, and the success rate varies across the tree. This is because the channel observations are noisy, so the right-branch decision is less reliable when it is made before the left-branch partial sums are known. In these cases, the validation condition often fails, and the decoder reverts to standard SC decoding.

As $E_\mathrm{b}/N_0$ increases, speculation becomes more reliable. Many nodes show a much higher success rate, and some nodes have success rates close to one. This means that, for these nodes, the speculative decision is almost always the same as the decision produced by standard SC decoding. Therefore, these subtrees can often be decoded speculatively without changing the final decoding result.

To further summarize this behavior, Fig.~\ref{fig:spec_threshold_tree} shows the first $E_\mathrm{b}/N_0$ value at which each speculative branch achieves a success rate greater than $50\,\%$. Each node is labeled by the corresponding subtree size and number of information bits, $(N_v,K_v)$, while each edge label gives the minimum $E_\mathrm{b}/N_0$ required for the speculation success rate on that branch to exceed this threshold. Therefore, for a given operating SNR, speculation can be selectively enabled only on branches whose labeled threshold is below that SNR. For example, at $E_\mathrm{b}/N_0=4.5$\,dB, branches labeled $4.12$\,dB, $4.39$\,dB, and $4.40$\,dB would be selected for speculation, while branches labeled $4.52$\,dB, $4.57$\,dB, and ${>}5$\,dB would remain decoded using the standard SC traversal.

This SNR-dependent selection suggests a practical way to define \emph{speculative special nodes}. Conventional special nodes are identified from the frozen-bit pattern of a subtree, allowing the decoder to skip standard traversal for known structures. In contrast, speculative special nodes would be identified by their ability to pass the speculation validation test with high probability. Thus, a node that does not match a conventional special-node pattern may still be treated as a special case if its speculative decision is consistently reliable.

This provides a possible extension of node-based SC decoding. In addition to deterministic special nodes such as rate-0, rate-1, repetition, and SPC nodes, the decoder could include speculative special nodes selected from the success-rate profile of the decoding tree. These nodes would allow the decoder to exploit reliable speculative regions and further reduce latency while preserving the same final decisions as standard SC decoding.

\section{Conclusion}
\label{sec:conclusion}
This paper presents Spec-SC, a speculative decoding framework that extends conventional SC decoding through a verification-based speed-up mechanism. In contrast to conventional node-based decoding, where parallelism is limited to collapsing predetermined special-node structures, Spec-SC enables additional parallelism in general nodes that do not necessarily match existing special-node patterns. By allowing the right branch of a decoding subtree to be processed speculatively and subsequently verified, the proposed framework reduces the effective traversal of otherwise sequential regions of the SC decoding tree.

Due to the verification, Spec-SC preserves the error-correction performance of the standard SC decoder. Simulation results further demonstrate that the average number of nodes traversed in the main decoding tree is substantially reduced. For a polar code of length $N=4096$, node-traversal reductions of $64\,\%$, $69\,\%$ and $66\,\%$ are observed for code rates $R=\nicefrac{1}{4}$, $R=\nicefrac{1}{2}$, and $R=\nicefrac{7}{8}$, respectively, at an FER target of $10^{-2}$. %

The ablation study further shows that speculation is not equally effective at all nodes. Instead, the speculation success rate depends on both the node location in the decoding tree and the channel condition. As $E_\mathrm{b}/N_0$ increases, some nodes consistently achieve high speculation success rates, indicating that their speculative decisions are often identical to the decisions produced by standard SC decoding. This suggests that speculation should not necessarily be applied uniformly across the entire tree. Rather, the success-rate profile can be used to identify reliable speculative nodes.

\section{Future Work}

These findings motivate the concept of \textit{speculative special-nodes}. Unlike conventional special nodes, which are defined by fixed frozen-bit patterns, speculative special nodes would be identified based on their likelihood of passing the speculation validation test, rather than on structural criteria alone. Such nodes would allow future decoders to skip additional tree traversal even for subtrees that do not conform to a conventional special-node structure, thereby extending the applicability of speculation beyond the cases considered in this work.

A second direction concerns the verification criterion itself. Although the proposed criterion guarantees no loss in error-rate performance, our results show that it frequently fails to recognize speculative results that are in fact correct, limiting the achievable speed-up. Developing more optimistic (yet still reliable) verification criteria is therefore a promising avenue for improving performance further.

This work has also been restricted to the algorithmic level of Spec-SC. An important next step is to evaluate the scheme in an actual hardware implementation, to assess how speculation affects throughput and hardware complexity in practice, since these effects cannot be fully captured through algorithmic analysis alone.

Finally, a detailed timing analysis of Spec-SC is needed, along with an investigation of its scheduling in fully-parallel and semi-parallel decoder architectures, both of which are likely to reveal additional trade-offs relevant to practical deployment.

\bibliographystyle{IEEEtran}
\bibliography{IEEEabrv,bibliography}

\begin{thebibliography}{10}
\providecommand{\url}[1]{#1}
\csname url@samestyle\endcsname
\providecommand{\newblock}{\relax}
\providecommand{\bibinfo}[2]{#2}
\providecommand{\BIBentrySTDinterwordspacing}{\spaceskip=0pt\relax}
\providecommand{\BIBentryALTinterwordstretchfactor}{4}
\providecommand{\BIBentryALTinterwordspacing}{\spaceskip=\fontdimen2\font plus
\BIBentryALTinterwordstretchfactor\fontdimen3\font minus \fontdimen4\font\relax}
\providecommand{\BIBforeignlanguage}[2]{{%
\expandafter\ifx\csname l@#1\endcsname\relax
\typeout{** WARNING: IEEEtran.bst: No hyphenation pattern has been}%
\typeout{** loaded for the language `#1'. Using the pattern for}%
\typeout{** the default language instead.}%
\else
\language=\csname l@#1\endcsname
\fi
#2}}
\providecommand{\BIBdecl}{\relax}
\BIBdecl

\bibitem{arikan2009channel}
E.~Ar{\i}kan, ``{Channel polarization: A method for constructing capacity-achieving codes for symmetric binary-input memoryless channels},'' \emph{{IEEE} Trans. Inf. Theory}, vol.~55, no.~7, pp. 3051--3073, Jul. 2009.

\bibitem{3gpp2018multiplexing}
3GPP, ``Multiplexing and channel coding,'' \emph{Technical Specification (TS) 38.212, 3rd Generation Partnership Project (3GPP)}, vol.~6, 2018.

\bibitem{sarkis2014fast}
G.~Sarkis, P.~Giard, A.~Vardy \emph{et~al.}, ``{Fast polar decoders: Algorithm and implementation},'' \emph{{IEEE} J. Sel. Areas Commun.}, vol.~32, no.~5, pp. 946--957, 2014.

\bibitem{sarkis2015fast}
------, ``Fast list decoders for polar codes,'' \emph{{IEEE} J. Sel. Areas Commun.}, vol.~34, no.~2, pp. 318--328, Nov. 2015.

\bibitem{tal2015list}
I.~Tal and A.~Vardy, ``List decoding of polar codes,'' \emph{{IEEE} Trans. Inf. Theory}, vol.~61, no.~5, pp. 2213--2226, May 2015.

\bibitem{fan2015low}
Y.~Fan, C.~Xia, J.~Chen \emph{et~al.}, ``A low-latency list successive-cancellation decoding implementation for polar codes,'' \emph{{IEEE} J. Sel. Areas Commun.}, vol.~34, no.~2, pp. 303--317, 2015.

\bibitem{hashemi2017fast}
S.~A. Hashemi, C.~Condo, and W.~J. Gross, ``Fast and flexible successive-cancellation list decoders for polar codes,'' \emph{{IEEE} Trans. Signal Process.}, vol.~65, no.~21, pp. 5756--5769, Aug. 2017.

\bibitem{alamdar2011simplified}
A.~Alamdar-Yazdi and F.~R. Kschischang, ``A simplified successive-cancellation decoder for polar codes,'' \emph{{IEEE} Commun. Lett.}, vol.~15, no.~12, pp. 1378--1380, 2011.

\bibitem{balatsoukas2015llr}
A.~Balatsoukas-Stimming, M.~B. Parizi, and A.~Burg, ``{LLR-based successive cancellation list decoding of polar codes},'' \emph{{IEEE} Trans. Signal Process.}, vol.~63, no.~19, pp. 5165--5179, 2015.

\bibitem{cocskun2022information}
M.~C. Co{\c{s}}kun and H.~D. Pf{\i}ster, ``An information-theoretic perspective on successive cancellation list decoding and polar code design,'' \emph{{IEEE} Trans. Inf. Theory}, vol.~68, no.~9, pp. 5779--5791, 2022.

\bibitem{chen2012list}
K.~Chen, K.~Niu, and J.~Lin, ``List successive cancellation decoding of polar codes,'' \emph{Electron. Lett.}, vol.~48, no.~9, pp. 500--501, 2012.

\bibitem{Niu2012CRC}
K.~Niu and K.~Chen, ``{CRC-aided decoding of polar codes},'' \emph{{IEEE} Commun. Lett.}, vol.~16, no.~10, pp. 1668--1671, Sep. 2012.

\bibitem{li2012adaptive}
B.~Li, H.~Shen, and D.~Tse, ``An adaptive successive cancellation list decoder for polar codes with cyclic redundancy check,'' \emph{{IEEE} Commun. Lett.}, vol.~16, no.~12, pp. 2044--2047, 2012.

\bibitem{sarkis2012data}
G.~Sarkis and W.~J. Gross, ``Polar codes for data storage applications,'' in \emph{International Conference on Computing, Networking and Commun. (ICNC 2013)}, pp. 840--844.

\bibitem{zhang2015split}
Z.~Zhang, L.~Zhang, X.~Wang \emph{et~al.}, ``A split-reduced successive cancellation list decoder for polar codes,'' \emph{{IEEE} J. Sel. Areas Commun.}, vol.~34, no.~2, pp. 292--302, 2015.

\bibitem{zhang2023channel}
H.~Zhang and W.~Tong, ``Channel coding for {6G} extreme connectivity--requirements, capabilities, and fundamental tradeoffs,'' \emph{IEEE BITS the Information Theory Magazine}, vol.~3, no.~1, pp. 54--66, 2023.

\bibitem{geiselhart20236g}
M.~Geiselhart, F.~Krieg, J.~Clausius \emph{et~al.}, ``{6G}: A welcome chance to unify channel coding?'' \emph{IEEE BITS the Information Theory Magazine}, vol.~3, no.~1, pp. 67--80, 2023.

\bibitem{proceedings2024trends}
S.~Miao, C.~Kestel, L.~Johannsen, M.~Geiselhart, L.~Schmalen \emph{et~al.}, ``{Trends in Channel Coding for 6G},'' \emph{Proceedings of the IEEE}, vol. 112, no.~7, pp. 653--675, 2024.

\bibitem{yang2026improved}
Z.~Yang, L.~Chen, X.~Wang, and H.~Zhang, ``{Improved successive cancellation decoding of long polar codes through perturbing a posteriori LLRs and its theoretical insights},'' \emph{{IEEE} Trans. Commun.}, 2026.

\bibitem{wang2023perturbation}
X.~Wang, H.~Zhang, J.~Tong \emph{et~al.}, ``{Perturbation-enhanced SCL decoder for polar codes},'' in \emph{2023 IEEE Globecom Workshops (GC Wkshps)}, 2023, pp. 1674--1679.

\bibitem{li2025enhanced}
J.~Li, S.~Shen, and W.~J. Gross, ``Enhanced successive cancellation list decoder for long polar codes targeting {6G} air interface,'' \emph{arXiv preprint arXiv:2508.16498}, 2025.

\bibitem{krieg2025longpolarvsldpc}
F.~Krieg, M.~R{\"u}benacke, A.~Zunker, and S.~ten Brink, ``{Long Polar vs. LDPC Codes under Complexity-Constrained Decoding},'' in \emph{2025 IEEE Globecom Workshops (GC Wkshps)}, 2025, pp. 272--277.

\bibitem{tong2023fast}
J.~Tong, X.~Wang, Q.~Zhang \emph{et~al.}, ``{Fast polar codes for terabits-per-second throughput communications},'' in \emph{2023 IEEE 34th Annual International Symposium on Personal, Indoor and Mobile Radio Communications (PIMRC)}, 2023, pp. 1--6.

\bibitem{ren2022sequence}
Y.~Ren, A.~T. Kristensen, Y.~Shen \emph{et~al.}, ``{A sequence repetition node-based successive cancellation list decoder for 5G polar codes: Algorithm and implementation},'' \emph{{IEEE} Trans. Signal Process.}, vol.~70, pp. 5592--5607, 2022.

\bibitem{bossert2025hiddencodewords}
M.~Bossert, ``{Soft Decision Decoding of Recursive Plotkin Constructions Based on Hidden Code Words},'' \emph{{IEEE} Trans. Inf. Theory}, vol.~71, no.~6, pp. 4228--4249, 2025.

\bibitem{kamenev2023highratefirst}
M.~Kamenev, ``{Recursive Decoding of Reed-Muller Codes Starting With the Higher-Rate Constituent Code},'' \emph{{IEEE} Trans. Inf. Theory}, vol.~69, no.~4, pp. 2206--2217, 2023.

\bibitem{mishra2012successive}
A.~Mishra, A.~J. Raymond, L.~G. Amaru \emph{et~al.}, ``{A successive cancellation decoder ASIC for a 1024-bit polar code in 180nm CMOS},'' in \emph{2012 IEEE Asian solid state circuits conference (A-SSCC)}, pp. 205--208.

\bibitem{zhou2016segmented}
H.~Zhou, C.~Zhang, W.~Song, S.~Xu, and X.~You, ``{Segmented CRC-aided SC list polar decoding},'' in \emph{83rd Vehicular Technology Conference (VTC Spring)}.\hskip 1em plus 0.5em minus 0.4em\relax IEEE, 2016, pp. 1--5.

\bibitem{hashemi2016partion}
S.~A. Hashemi, A.~Balatsoukas-Stimming, P.~Giard, C.~Thibeault, and W.~J. Gross, ``Partitioned successive-cancellation list decoding of polar codes,'' in \emph{IEEE International Conference on Acoustics, Speech and Signal Processing (ICASSP)}, 2016, pp. 957--960.

\bibitem{hashemi2017partition}
S.~A. Hashemi, C.~Condo, F.~Ercan, and W.~J. Gross, ``Memory-efficient polar decoders,'' \emph{IEEE J. Emerg. Sel. Top. Circuits Syst.}, vol.~7, no.~4, pp. 604--615, 2018.

\bibitem{sionna}
J.~Hoydis, S.~Cammerer, F.~{Ait Aoudia} \emph{et~al.}, ``Sionna: An open-source library for next-generation physical layer research,'' \emph{arXiv preprint}, 2022.

\bibitem{Trifonov2012Efficient}
P.~Trifonov, ``Efficient design and decoding of polar codes,'' \emph{{IEEE} Trans. Commun.}, vol.~60, no.~11, pp. 3221--3227, Aug. 2012.

\end{thebibliography}
\end{document}